\newcommand\arxiv
\newtheorem{problem}{Problem}
\newtheorem{theorem}{Theorem}[section]
\newtheorem{corollary}[theorem]{Corollary}
\newtheorem{lemma}[theorem]{Lemma}
\newcommand{\para}[1]{\paragraph{\textbf{#1.}}}
\patchcmd{\quote}{\rightmargin}{\leftmargin 1em \rightmargin}{}{}
\newcommand{\red}[1]{{{\textcolor{red}{}}}}
\newcommand{\tuple}[1]{\ensuremath{\left \langle #1 \right \rangle }}
\renewcommand\bf[1]{\textbf{#1}}
\renewcommand\sf[1]{\textsf{\small{#1}}}
\newenvironment{exampleregion}{}{}
\newenvironment{exampleregion}
{\mdfsetup{
hidealllines=true,
innerleftmargin=3pt,
innerrightmargin=3pt,
leftmargin=-3pt,
rightmargin=-3pt,
nobreak=false,
backgroundcolor=lightgray!20
}
\begin{mdframed}%
}
{\end{mdframed}}
\newcommand{\e}[1]{\{#1\}}
\newcommand{\traces}{\mathit{beh}}
\newcommand{\events}{\mathit{evts}}
\newcommand{\labels}{\alpha}
\newcommand{\compo}{\|_m}
\newcommand{\mparw}[3]{#1 \|_{#3} #2}
\newcommand{\Alice}{\sf{Alice}\xspace}
\newcommand{\Bob}{\sf{Bob}\xspace}
\newcommand{\secret}{{\sf{secret}\xspace}}
\newcommand{\none}{{\sf{none}\xspace}}
\newcommand{\public}{{\sf{public}\xspace}}
\newcommand{\keyX}{\sf{k}_\sf{{X}}}
\newcommand{\keyY}{\sf{k}_\sf{{Y}}}
\newcommand{\writeBob}[1]{{\sf{writeBob}(#1)\xspace}}
\newcommand{\writeEve}[1]{{\sf{writeEve}(#1)\xspace}}
\newcommand{\knows}{{\sf{knows}}}
\newcommand{\Msg}{\sf{Msg}\xspace}
\newcommand{\Sender}{\sf{Sender}\xspace}
\newcommand{\Receiver}{\sf{Receiver}\xspace}
\newcommand{\ReceiverX}{\sf{Receiver}_\sf{X}\xspace}
\newcommand{\ReceiverY}{\sf{Receiver}_\sf{Y}\xspace}
\newcommand{\Key}{\sf{Key}\xspace}
\newcommand{\encWrite}[2]{\sf{encWrite}(#1,#2)\xspace}
\newcommand{\req}{\sf{req}\xspace}
\newcommand{\URLL}{\sf{URL}\xspace}
\newcommand{\Host}{\sf{Host}\xspace}
\newcommand{\Path}{\sf{Path}\xspace}
\newcommand{\Query}{\sf{Query}\xspace}
\newcommand{\Body}{\sf{Body}\xspace}
\newcommand{\Name}{\sf{Name}\xspace}
\newcommand{\Value}{\sf{Value}\xspace}
\newcommand{\Status}{\sf{Status}\xspace}
\newcommand{\Method}{\sf{Method}\xspace}
\newcommand{\Header}{\sf{Header}\xspace}
\newcommand{\header}{\sf{header}\xspace}
\newcommand{\Response}{\sf{Resp}\xspace}
\newcommand{\resp}{\sf{resp}\xspace}
\newcommand{\List}{\sf{List}\xspace}
\newcommand{\Eve}{\sf{Eve}\xspace}
\newcommand{\Client}{\sf{Client}\xspace}
\newcommand{\AuthServer}{\sf{AuthServer}\xspace}
\newcommand{\initiate}{\sf{initiate}\xspace}
\newcommand{\authorize}{\sf{authorize}\xspace}
\newcommand{\forward}{\sf{forward}\xspace}
\newcommand{\getToken}{\sf{getToken}\xspace}
\newcommand{\getReqToken}{\sf{getReqToken}\xspace}
\newcommand{\notify}{\sf{notify}\xspace}
\newcommand{\getAccessToken}{\sf{getAccessToken}\xspace}
\newcommand{\Browser}{\sf{Browser}\xspace}
\newcommand{\Server}{\sf{Server}\xspace}
\begin{document}
\else





\begin{document}

\makeatletter
\renewcommand{\paragraph}{%
  \@startsection{paragraph}{4}%
  {\z@}{1.25ex \@plus 1ex \@minus .2ex}{-1em}%
  {\normalfont\normalsize\bfseries}%
}
\makeatother

\fi

\ifdefined\tacas
\title{Automated Synthesis of Secure Platform Mappings} 

\author{Eunsuk Kang\inst{1} \and
 St\'{e}phane Lafortune\inst{2} \and
 Stavros Tripakis\inst{3} 
}

\institute{Carnegie Mellon University \texttt{eskang@cmu.edu} \and
University of Michigan \texttt{stephane@umich.edu} \and
Northeastern University \texttt{stavros@northeastern.edu}
}
\else
\title{Synthesis of Property-Preserving Mappings}

\author{Eunsuk Kang\thanks{School of Computer Science. e-mail: eskang@cmu.edu}}
\affil{Carnegie Mellon University}

\author{St\'{e}phane Lafortune\thanks{Department of Electrical
    Engineering and Computer Science. e-mail: stephane@umich.edu}}
\affil{University of Michigan, Ann Arbor}

\author{Stavros Tripakis\thanks{College of Computer and Information
    Science. e-mail: stavros@northeastern.edu}}
\affil{Northeastern University}

\fi

\maketitle

\begin{abstract}
  System development often involves decisions about how a high-level
  design is to be implemented using primitives from a low-level
  platform. Certain decisions, however, may introduce undesirable
  behavior into the resulting implementation, possibly leading to a
  violation of a desired property that has already been established at
  the design level. In this paper, we introduce the problem of
  \textit{synthesizing a property-preserving platform mapping}: A set
  of implementation decisions ensuring that a desired property is
  preserved from a high-level design into a low-level platform
  implementation. We provide a formalization of the synthesis problem
  and propose a technique for synthesizing a mapping based on symbolic
  constraint search. We describe our prototype implementation, and a
  real-world case study demonstrating the application of our technique to
  synthesizing secure mappings for the popular web authorization
  protocols OAuth 1.0 and 2.0.
\end{abstract}

%
%



\ifdefined\techreport
\else
\fi

\ifdefined\techreport
\red{Stavros main comments, Aug 30, 2017:
(1) notation: the paper is sloppy with notation at several places, using different notations for the same thing, or the same notation for different things;
(2) possible duplication in semantics equivalence proof: see comments end of Section~3;
(3) Section~4 out of focus: see comments there;
(4) Section~5: perhaps the discussion on reusing the same specification for high and low-level verification can be simplified? please see comments in that section;
(5) I don't think the dataflow extension should be in the appendix. see my comment there.
}
\fi

\ifdefined\techreport
\section{Introduction}
\label{sec-introduction}

When building a complex software system, one begins by
coming up with an abstract design, and then constructs an implementation that
conforms to this design. In practice, there are rarely enough time and
resources available to build an implementation from scratch, and so
this process often involves reuse of an existing \emph{platform}---a
collection of generic components, data structures, and libraries that
are used to build an application in a particular domain.

The benefits of reuse also come with potential risks. A typical
platform exhibits its own complex behavior, including subtle
interactions with the environment that may be difficult to anticipate
and reason about. Typically, the developer must work with the platform
as it exists, and is rarely given the luxury of being able to modify
it and remove unwanted features. For example, when building a web
application, a developer must work with a standard browser and take
into account all its features and security vulnerabilities. As a
result, achieving an implementation that perfectly conforms to the
design---in the traditional notion of behavioral
refinement~\cite{hoare-refinement}---may be too difficult in
practice. Worse, the resulting implementation may not necessarily
preserve desirable properties that have already been established at
the level of design.

These risks are especially evident in applications where security is a
major concern. For example, OAuth 2.0, a popular authorization
protocol subjected to rigorous and formal analysis at an abstract
level~\cite{oauth-verification1,oauth-verification2,oauth-verification3},
has been shown to be vulnerable to attacks when implemented on a web
browser or a mobile
device~\cite{oauth-study-sun,oauth-study-wang,oauth-study-chen}. Many
of these vulnerabilities are not due to simple programming errors:
They arise from logical flaws that involve a subtle interaction
between the protocol logic and the details of the
underlying platform. Unfortunately, OAuth itself does not explicitly
guard against these flaws, since it is intended to be a
\textit{generic}, \textit{abstract} protocol that deliberately omits
details about potential platforms. On the other hand,
anticipating and mitigating against these risks require an in-depth
understanding of the platform and security expertise, which many
developers do not possess.


This paper proposes an approach to help developers overcome these
risks and achieve an implementation that preserves desired
properties. In particular, we formulate this task as the problem of
automatically synthesizing a \textit{property-preserving
 platform mapping}: A set of implementation decisions ensuring
that a desired property is preserved from a high-level design into a
low-level platform implementation. 

Our approach builds on the prior work of Kang et
al.~\cite{eunsuk-security}, which proposes a modeling and verification
framework for reasoning about security attacks across multiple levels
of abstraction. The central notion in this framework is that of a
\textit{mapping}, which captures a developer's decisions about how
abstract system entities are to be realized in terms of their concrete
counterparts. In this paper, we extend the framework with the novel problem of
synthesizing a property-preserving mapping, and propose an algorithmic
technique for performing this synthesis task.

We have built a prototype implementation of the synthesis
technique. Our tool accepts a high-level design model, a desired
system property (both specified by the developer), and a model of a
low-level platform (built and maintained separately by a domain
expert). The tool then produces a mapping (if any) that ensures that
the resulting platform implementation preserves the given property. As
a case study, we have successfully applied our tool to synthesize
property-preserving mappings for two different variants of OAuth
implemented on top of HTTP. Our results are promising: The
implementation decisions captured by our synthesized mappings describe
effective mitigations against some of the common vulnerabilities that
have been found in deployed OAuth
implementations~\cite{oauth-study-sun,oauth-study-wang}. The
contributions of this paper include:
\begin{itemize}
\item A formal treatment of mapping, including a correction in the original
  definition~\cite{eunsuk-security} (Section~\ref{sec-formal-model});
    \ifdefined\techreport
\item A novel operational semantics for mapping, which allows more intuitive
  modeling using available tools, and a proof that the operational semantics
  is equivalent to the denotational semantics
  (Section~\ref{sec-operational-semantics});
  \fi
\item A formulation of the \textit{mapping synthesis problem}, a novel
  approach for ensuring the preservation of a property between a
  high-level design and its platform implementation
  (Section~\ref{sec-synthesis-problems});
\item A technique for automatically synthesizing mappings based on
  symbolic constraint search (Section~\ref{sec-synthesis-technique});
  \ifdefined\techreport
\item Discussions of other features of the mapping-based modeling
  approach, including mappings as a communication mechanism
  (Section~\ref{sec-mapping-communication}) and dataflow modeling
  (Section~\ref{sec-dataflow}); and
  \fi
\item A prototype implementation of the synthesis technique, and
  a real-world case study demonstrating the feasibility of this approach
  (Section~\ref{sec-case-study}).
\end{itemize}
We conclude with a discussion of related work
(Section~\ref{sec-related-work}).

\section{Running Example}
\label{sec-example}

\begin{figure*}[!t]
\centering
\includegraphics[width=0.65\textwidth]{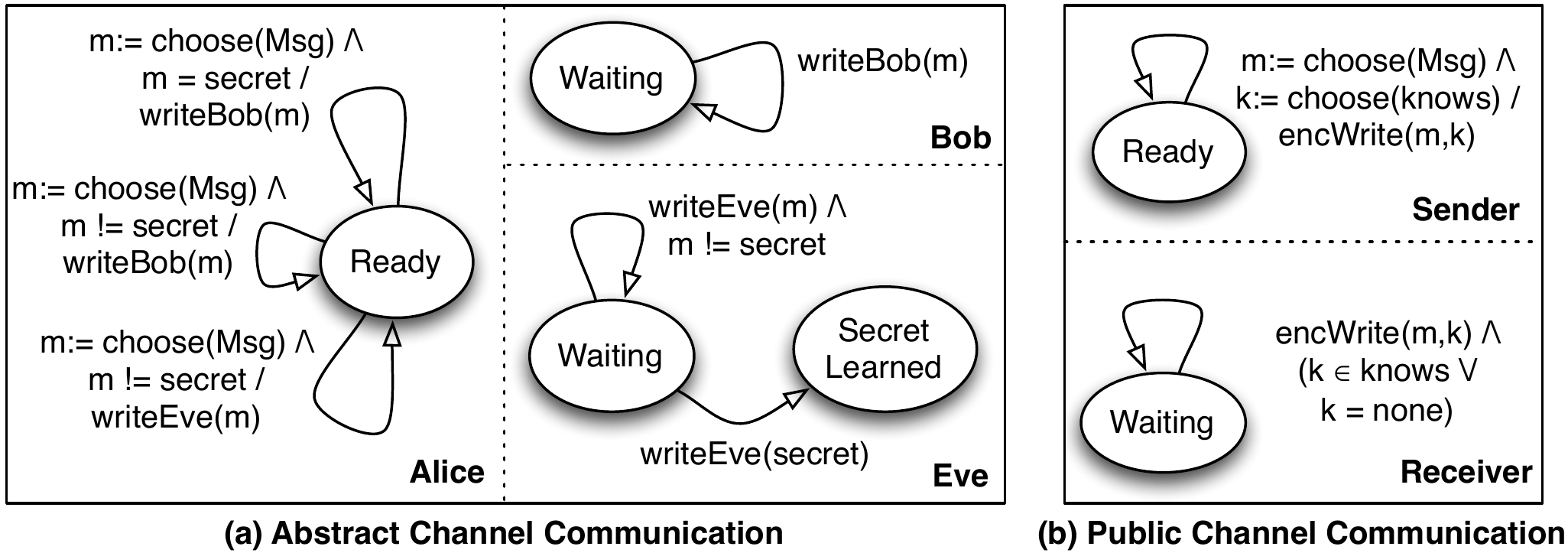}
\caption{\small{State machines describing a pair of high-level and
    low-level communication models. Each transition is in form
    $cond/out$, where $cond$ is the triggering condition, and $out$ is
    the label describing an output event; $out$ is
    optional. $choose(X)$ is a built-in function that
    non-deterministically returns a member of set $X$. For example, in
    the \Eve machine, the self-looping transition from state
    \sf{Waiting} takes place only if it receives a \sf{writeEve}
    event with some message $m$ that is not a secret.}}
\label{fig-example}
\end{figure*}

Let us introduce a running example that we will use throughout the
paper. Consider the simple model in Figure~\ref{fig-example}(a),
consisting of three interacting processes, \Alice, \Bob, and \Eve.  In
this system, \Alice wishes to communicate messages to \Bob and \Eve,
but is willing to share its secrets only with \Bob. \Alice has access
to two separate communication channels (represented by event labels
\sf{writeBob} and \sf{writeEve}). \Alice behaves as follows: It first
non-deterministically selects a message to be sent ($m$) from some set
$\sf{Msg}$. If the selected message is not a secret (represented by
the constant $\sf{secret} \in \Msg$), \Alice sends $m$ to either \Bob
or \Eve over the corresponding channel (i.e., by performing
\sf{writeBob} or \sf{writeEve}). If, however, the chosen message $m$ is a
secret, \Alice sends $m$ only to \Bob.

Suppose that \Eve is a malicious character whose goal is to learn the
secret shared between \Alice and \Bob. One desirable property of the
system is that \Eve should never be able to learn the secret. This property can be stated as 
the following first-order logic formula
\begin{align*}
\forall t \in T : \neg \exists e \in
  events(t) : \sf{writeEve}(\secret)  \in e
\end{align*}
where $T$ is the set of possible system traces, and $events(t)$
returns the set of events in trace $t$\footnote{The behavioral
  semantics will be further explained in
  Section~\ref{sec-formal-model}.}. In other words, the property
states that the system should never enter the $\sf{SecretLearned}$
state by executing a $\sf{writeEve}(\secret)$ event
(see Figure~\ref{fig-example}). It can be
observed that the composition of the three processes,
$\Alice \parallel \Bob \parallel \Eve$, satisfies this property, since \Alice, by
design, never sends a secret over the channel to \Eve.


Now consider the model in Figure~\ref{fig-example}(b), which describes
communication between a pair of processes over an encrypted public
channel (represented by event label $\encWrite{m}{k}$, where $m$ is
the message being sent, and $k$ is the key used to encrypt the
message). Each process is associated with a value called \sf{knows},
which represents the set of keys that the process has access
to\ifdefined\techreport\footnote{The concept of \textit{knows}, along with our approach to
  dataflow modeling, is described in more detail in
  Section~\ref{sec-dataflow}.}\fi.  In its behavior, \Sender
non-deterministically chooses a message and a key to encrypt it
with. \Receiver, upon receiving the message, is able to read it only
if it knows the key $k$ that $m$ is encrypted with, or if $m$ has
not been encrypted at all (i.e., $k = \sf{none}$). In this particular
example, we use the absence of a corresponding transition to model the
assumption that \Receiver is unable to read message $m$ without
knowing the key.

Suppose that we wish to reason about the behavior of the abstract
communication system from Figure~\ref{fig-example}(a) when it is
implemented over the public channel in
\ref{fig-example}(b). Conceptually, implementation 
involves {\em mapping} the elements of the abstract, high-level model
into those of the low-level model.  The developer's task is to
determine such a mapping, and by doing so resolve various
implementation decisions that need to be made.  In this toy example,
the crucial implementation decision stems from the fact that, while
the abstract model provides two separate channels, \sf{writeBob} and
\sf{writeEve}, the low-level model provides only one channel,
\sf{encWrite}.  Therefore, the developer must decide how the abstract
events \sf{writeBob(m)} and \sf{writeEve(m)} are to be mapped into
\sf{encWrite(m,k)} events. In particular, to ensure the security of
the resulting implementation, the developer must answer: Which of the
decisions will ensure that secret messages remain protected from \Eve
in the resulting implementation? In the rest of this paper, we
describe how this question can be formulated and tackled as the
problem of synthesizing a \textit{property-preserving mapping} between
a pair of models that depict a high-level design and a low-level
platform.

\section{Mapping Composition}
\label{sec-formal-model}

Our synthesis approach builds on the modeling and verification
framework proposed in \cite{eunsuk-security}, which is designed to
allow reasoning about behavior of processes across multiple
abstraction layers. In this framework, a trace-based semantic model
(based on CSP~\cite{hoare-csp}) is extended to represent events as
\textit{sets of labels}, and includes a new composition operator based
on the notion of \textit{mappings}, which relate event labels from one
abstraction layer to another. In this section, we present the
essential elements of this framework. \ifdefined\arix We also correct a flaw in the
previous (denotational) definition of mapping composition proposed
in~\cite{eunsuk-security}.  \fi

\ifdefined\arxiv
\subsection{Denotational Semantics}
\label{sec-modeling-framework}
\fi

\ifdefined\tacas
\begin{figure*}[!t]
\centering
\includegraphics[width=0.75\textwidth]{diagrams/public-channel-example.pdf}
\vspace{-2.5mm}
\caption{\small{A pair of high-level and
    low-level communication models. Each transition is in form
    $cond/out$, where $cond$ is the triggering condition, and $out$ is
    an optional label describing an output event. $choose(X)$ is a function that
    non-deterministically selects a member of set $X$.}}
\label{fig-example}
\vspace{-5mm}
\end{figure*}

\para{Running example} Consider the simple model in
Figure~\ref{fig-example}(a), consisting of three interacting
processes, \Alice, \Bob, and \Eve.  In this system, \Alice wishes to
communicate messages to \Bob and \Eve, but is willing to share its
secrets only with \Bob. \Alice has access to two separate
communication channels (represented by event labels \sf{writeBob} and
\sf{writeEve}). \Alice non-deterministically selects a message
to be sent ($m$) from some set $\sf{Msg}$. If the selected message is
not a secret (represented by the constant $\sf{secret} \in \Msg$),
\Alice sends $m$ to either \Bob or \Eve over the corresponding channel
(i.e., by performing \sf{writeBob} or \sf{writeEve}). If the
chosen message $m$ is a secret, \Alice sends $m$ only to \Bob.

\Eve is a malicious character whose goal is to learn the
secret shared between \Alice and \Bob. A desirable property of the
system is that \Eve should never be able to learn the secret, which
can be stated as the LTL formula
$\neg\lozenge(\sf{SecretLearned})$. It can be observed that the
composition of the three processes, $\Alice \| \Bob \| \Eve$,
satisfies the property, since \Alice, by design, never sends a secret
over the channel to \Eve.

The model in Figure~\ref{fig-example}(b) describes communication
between a pair of processes over an encrypted public channel
(represented by event label $\encWrite{m}{k}$, where $m$ is the
message being sent, and $k$ is the key used to encrypt the
message). Each process is associated with a value called \sf{knows},
which represents the set of keys that the process has access to.
\Sender non-deterministically chooses a message and a key to encrypt
it with. \Receiver, upon receiving the message, is able to read it
only if it knows the key $k$ that $m$ is encrypted with, or if $m$ has
not been encrypted at all (i.e., $k = \sf{none}$). 

Suppose that we wish to reason about the behavior of the abstract
communication system from Figure~\ref{fig-example}(a) when it is
implemented over the public channel in \ref{fig-example}(b). In
particular, in the low-level implementation, \Bob and \Eve are
required to share the same channel (\sf{enc}), no longer benefitting
from the separation provided by the abstraction in
Figure~\ref{fig-example}(a). \ifdefined\techreport The developer's task is to decide how the
abstract events \sf{sendBob} and \sf{sendEve} are to be implemented as
\sf{enc}.\fi Does the property of the abstract communication hold in
every possible implementation? If not, which decisions will ensure
that secret messages remain protected from \Eve? We formulate these questions as the problem of
synthesizing a \textit{property-preserving mapping} between a pair of
high-level and low-level models.  \fi


\para{Events, traces, and processes} Let $L$ be a potentially
infinite set of labels. An {\em event} $e$ is a finite, non-empty set
of labels: $e \in E(L) = \mathbb{P}(L)-\{\emptyset\}$, where
$\mathbb{P}(L)$ is the powerset of $L$, and $\emptyset$ denotes the
empty set. Let $S^*$ denote the set of all finite sequences of
elements of some set $S$. A {\em trace} $t$ is a finite sequence of
events: $t\in T(L)$, where $T(L)$ is the set of all traces over $L$
(i.e.,\ $T(L) = (E(L))^*$).  The empty trace is denoted by $\tuple{}$, and the
trace consisting of a sequence of events $e_1,e_2,...$ is denoted
$\tuple{e_1,e_2,...}$.  If $t$ and $t'$ are traces, then $t\cdot t'$
is the trace obtained by concatenating $t$ and $t'$. Note that
$\tuple{}\cdot t =t\cdot\tuple{} = t$ for any trace $t$. The {\em
  events of a trace $t$}, written $events(t)$, is the set of all
events appearing in $t$. For example, if
$t = \tuple{\{a\},\{a, c\}, \{b\}}$, then
$events(t)=\{ \{a\}, \{a, c\}, \{b\}\}$.

\ifdefined\techrep The set of all events over label set $L$ is
$E(L) = \mathbb{P}(L)$. An event may be an empty set, denoted $\emptyset$.
$$
events(t)= \{ e \subseteq L \mid t = t_1 \cdot \tuple{e} \cdot t_2,
	\mbox{ for some traces } t_1,t_2 \}
$$
\fi

Let $t$ be a trace over set of labels $L$, and let $A \subseteq L$ be
a subset of $L$.  The {\em projection of $t$ onto $A$},
denoted $t \project A$, is defined as follows:
\begin{align*}
\tuple{} \project A = \tuple{} \qquad (\tuple{e}\cdot t) \project A = \left\{\begin{array}{ll}
\tuple{e \cap A} \cdot (t \project A) & \text{if}\ e \cap A \neq \emptyset \\
(t \project A) & \text{otherwise}
\end{array}\right.
\end{align*}
For example, if $t=\tuple{\{a\},\{a, c\},\{b\}}$, then
$t \project\{a,b\} = \tuple{\{a\},\{a\},\{b\}}$ and 
$t \project\{b,c\} = \tuple{\{c\},\{b\}}$.

A {\em process} $P$ is defined as a triple
$(\labels,\events,\traces)$. The {\em alphabet} $\labels
\subseteq L$ is the set of all labels appearing in
$P$, and $\events \subseteq
E(\labels)$ is the set of events that \emph{may} appear in traces of
$P$, which are denoted $\traces \subseteq
T(\labels)$.  We assume traces in every process
$P$ to be \textit{prefix-closed}; i.e.,\ $\tuple{} \in
\traces$ and for every non-empty trace $t' = t \cdot \tuple{e} \in
\traces$, $t \in \traces$.

\para{Parallel composition}
A pair of processes $P$ and $Q$ synchronize with each other by
performing events $e_1$ and $e_2$, respectively, if these two events share at
least one label. In their parallel composition, denoted $P  \parallel Q$, this
synchronization is represented by a new event $e'$ that is constructed
as the union of $e_1$ and $e_2$ (i.e.,\ $e' = e_1 \cup
e_2$). 

Formally, let $P = (\alpha_P, \events_P, \traces_P)$ and $Q =
(\alpha_Q, \events_Q, \traces_Q)$ be a pair of proceses. Their
parallel composition is defined as follows:
\begin{align*}
& \events_{P \parallel Q} = \{ e \in E(\alpha_P \cup \alpha_Q) \mid
  e \cap \alpha_P \in \events_P \land e \cap \alpha_Q \in \events_Q
  \land cond(e) \} \\
 & \traces_{P \parallel Q} = \{ t \in (\events_{P \parallel Q})^* \mid 
(t \project \alpha_P) \in \traces_P \land (t \project
   \alpha_Q) \in \traces_Q \} \tag{\bf{Def.\ 1}}
\end{align*}
where $\alpha_{P \parallel Q} = \alpha_P \cup \alpha_Q$, and $cond(e)$ is defined as
\begin{align*}
cond(e) \equiv e \subseteq \alpha_P - \alpha_Q \lor e \subseteq
  \alpha_Q - \alpha_P \lor (\exists a \in e : a \in \alpha_P \cap
  \alpha_Q) \tag{\bf{Cond.\ 1}}
\end{align*}
The definition of $\traces_{P \parallel Q}$ states that if we take a
trace $t$ in the composite process and ignore labels that appear only
in $Q$, then the resulting trace must be a valid trace of $P$ (and
symmetrically for $Q$). The condition (\bf{Cond.\ 1}) is imposed on
every event appearing in $traces_{P \parallel Q}$ to ensure that an
event performed together by $P$ and $Q$ contains at least one common
label shared by both processes.




\ifdefined\arix
\begin{exampleregion} 
\bf{Example.} Consider a pair of processes, $P$ and $Q$, such that
$\alpha_P = \{a, b, q\}$ and $\alpha_Q = \{x, q\}$. Let $t_1\in \traces_P$
and $t_2\in \traces_Q$ be possible traces of these processes, where
$t_1=\tuple{\{a\},\{a, q\}, \{b\} }$ and $t_2=\tuple{\{x\},\{x, q\}
}$.

The following is a valid trace of the composition $P \parallel Q$:
\begin{align*}
\tuple{\{a\}, \{x\}, \{a, x, q\}, \{b\} } \in \traces_{P \parallel Q}
\end{align*}
where the event $\{a, x, q\}$ results from $P$ and $Q$
simultaneously performing $\{a,q\}$ and $\{x,q\}$, respectively.

When two events from the processes share at least one label, those
events are treated as the same kind of event, and require
simultaneous participation from both processes. Intuitively, $\{a,q\}$
may be treated like $a$, $q$, or both, 
depending on the alphabet of another process that wishes to interact with $P$. Since $Q$ is
capable of engaging in $\{x,q\}$, which itself can be treated like
$q$, $P$ and $Q$ possess an ability to influence each other through
simultaneous participation in $\{a,q\}$ and $\{x,q\}$.
In addition, when processes synchronize on a pair of events with
  distinct but overlapping sets of labels, the pair are combined into
  a new event by computing the union of the two label sets.

We also note that $cond$ is imposed on \bf{Def.\ 1} to specifically
  rule out undesirable cases in which processes interact through events in
  which they do not share any labels. For instance, the following
  trace is not a valid trace of $P \parallel Q$,
even though its projections onto
  $\alpha_P$ and $\alpha_Q$ belong into $P$ and $Q$, respectively:
\begin{align*}
\tuple{\{a, x\}, \{a, x, q\}, \{b\} } \notin \traces_{P \parallel Q}
\end{align*}
The reason the above trace is not valid is that the first event $\{a, x\}$
does not contain any common label of $P$ and $Q$.
\end{exampleregion}
\fi

This type of parallel composition can be seen as a generalization of the
parallel composition of CSP~\cite{hoare-csp}, from single labels to {\em sets}
of labels. That is, the CSP parallel composition is the special case of the 
composition of \bf{Def.\ 1} where
every event is a singleton (i.e., it contains exactly one label). Note
that if event $e$ contains exactly one label $a$, then $a$ must belong
to the alphabet of $P$ or that of $Q$, which means $cond(e)$ always evaluates
to true. The resulting expression in that case
\begin{align*}
\traces_{P \parallel Q} = \{ t \in T(\alpha_P \cup \alpha_Q) \mid &\ (t \project \alpha_P) \in P \land (t \project \alpha_Q) \in Q \}
\end{align*}
is equivalent to the definition of parallel composition in CSP~\cite[Sec.\ 2.3.3]{hoare-csp}.

\para{Mapping composition}

A {\em mapping $m$ over set of labels $L$} is a partial function
$m : L \fun L$. 
Informally, $m(a) = b$ stipulates that every event
that contains $a$ as a label is to be assigned $b$ as an additional
label.
We sometimes use the notations $a \mapsto_m b$ or $(a,b)\in m$ as alternatives to $m(a) = b$.
When we write $m(a)=b$ we mean that $m(a)$ is defined and is equal to $b$.
The {\em empty} mapping, denoted $m=\emptyset$, is the partial function
$m:L\to L$ which is undefined for all $a\in L$. 

{\em Mapping composition} allows a pair of processes to interact
with each other over distinct labels. Formally, consider two processes $P = (\alpha_P, \events_P, \traces_P)$ and $Q =
(\alpha_Q, \events_Q, \traces_Q)$, and let $L = \alpha_P\cup\alpha_Q$.  Given mapping
$m : L \fun L$, the {\em mapping composition} $P \compo Q$ is defined
as follows:
\begin{align*}
& \events_{P \compo Q} = \{ e \in E(\alpha_P \cup \alpha_Q) \mid\ e
  \cap \alpha_P \in \events_P \land e  \cap \alpha_Q \in \events_Q
  \land \\
& \qquad\qquad\qquad\qquad\qquad\qquad\quad cond'(e) \land condmap(e,m) \} \\
& \traces(P \compo  Q) = \{ t  \in (events_{P \compo Q})^* \mid (t \project \alpha_P) \in P \land (t
                     \project \alpha_Q) \in Q \}
   \tag{\bf{Def.\ 2}}
\end{align*}
where $\alpha_{P \compo Q} = \alpha_P \cup \alpha_Q$, and $cond'(e)$ and $condmap(e,m)$ are defined as:
\begin{align*}
& cond'(e) \equiv cond(e) \lor (\exists a \in e \cap \alpha_P, \exists b \in
  e \cap \alpha_Q : m(a) = b \lor m(b) = a)  \\
& condmap(e,m) \equiv (\forall a \in e, \forall b \in L :  m(a) = b \implies b \in e)
\end{align*}
The definition of $events$ in \bf{Def.\ 2} is similar to \bf{Def.\ 1}
above, but not identical:
the additional disjunct in  $cond'(e)$ allows $P$ and $Q$ to
synchronize even when they do not share any label, if at
least one pair of their labels are mapped to each other in $m$. The
predicate $condmap$ ensures that if an event $e$ contains a label $a$,
and $m$ maps $a$ to another label $b$, then $e$ also contains $b$.

Note that \bf{Def.\ 2} is different from the definition of mapping
composition in \cite{eunsuk-security}, and corrects a flaw in the
latter.  In particular, the definition in~\cite{eunsuk-security} omits
condition $cond'$, which permits the undesirable case in which events
$e_1$ and $e_2$ from $P$ and $Q$ are synchronized into union
$e=e_1\cup e_2$ even when the two events do not share any label.

Mapping composition is a generalization of parallel composition: The
latter is a special case of mapping composition where the given
mapping is empty:
\begin{lemma}
\label{lem-empty-mapping-parallel-compo}
  Given a pair of processes $P$ and $Q$, if $m = \emptyset$ then
  $P \compo Q = P \parallel Q$.
\end{lemma}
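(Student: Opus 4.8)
The plan is to establish the equality of the two processes componentwise: same alphabet, same event set, same trace set. The alphabet case is immediate, since both $\alpha_{P \compo Q}$ and $\alpha_{P \parallel Q}$ are by definition $\alpha_P \cup \alpha_Q$ (and $m = \emptyset$ is assumed to be a mapping over exactly this set $L = \alpha_P\cup\alpha_Q$).

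The substance is showing $\events_{P \compo Q} = \events_{P \parallel Q}$. Inspecting \textbf{Def.\ 1} and \textbf{Def.\ 2}, the two definitions of the event set are word-for-word the same except that \textbf{Def.\ 2} uses $cond'(e) \land condmap(e,m)$ in place of the single conjunct $cond(e)$. So it is enough to prove that, when $m = \emptyset$, we have $cond'(e)\land condmap(e,\emptyset) \iff cond(e)$ for every $e \in E(\alpha_P\cup\alpha_Q)$. I would argue the two directions by unfolding the definitions. Recall $cond'(e) \equiv cond(e) \lor (\exists a \in e\cap\alpha_P,\ \exists b\in e\cap\alpha_Q : m(a)=b \lor m(b)=a)$; the empty mapping is undefined everywhere, so the atomic statements $m(a)=b$ and $m(b)=a$ are false for all $a,b$, the added disjunct is unsatisfiable, and $cond'(e)$ reduces to $cond(e)$. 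Similarly $condmap(e,\emptyset) \equiv (\forall a\in e,\ \forall b\in L : m(a)=b \implies b\in e)$ holds vacuously, because its hypothesis $m(a)=b$ is never true for the empty mapping. Hence the predicate defining $\events_{P\compo Q}$ is logically equivalent to the one defining $\events_{P\parallel Q}$, and the two event sets coincide.

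For the traces, note that in both \textbf{Def.\ 1} and \textbf{Def.\ 2} the trace set is defined as the set of finite sequences over the composite event set whose projections onto $\alpha_P$ and $\alpha_Q$ are traces of $P$ and $Q$ respectively. Since we have just shown the composite event sets are equal, the two defining conditions become syntactically identical, so $\traces(P\compo Q) = \traces_{P\parallel Q}$. Combining the three equalities yields $P\compo Q = P\parallel Q$.

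I do not expect a genuine obstacle: the proof is essentially an unfolding of definitions. The one point worth stating carefully --- and which is the whole reason the lemma holds --- is that every occurrence of $m$ in \textbf{Def.\ 2} sits either inside a positive existential (the extra disjunct of $cond'$), which an empty $m$ falsifies, or as the hypothesis of a universally quantified implication ($condmap$), which an empty $m$ renders vacuous; in neither position can an empty $m$ impose or relax a real constraint, so \textbf{Def.\ 2} degenerates exactly to \textbf{Def.\ 1}.
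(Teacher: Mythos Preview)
Your proposal is correct and follows essentially the same approach as the paper's proof: both observe that for $m=\emptyset$ the extra disjunct in $cond'(e)$ is false (so $cond'(e)=cond(e)$) and $condmap(e,\emptyset)$ is vacuously true, making \textbf{Def.\ 2} collapse to \textbf{Def.\ 1}. Your write-up is simply more explicit about the componentwise equality of alphabets, events, and traces, whereas the paper condenses this into two sentences.
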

\ifdefined\arxiv
\begin{proof}
  If $m = \emptyset$, the second disjunct of $cond'(e)$ in \bf{Def.\
    2} is false, and thus $cond'(e)$ $=$ $cond(e)$. In addition,
  $condmap(e,m)$ evaluates to true, and \bf{Def.\ 2} becomes
  equivalent to \bf{Def.\ 1}.
\end{proof}
\fi

\ifdefined\arxiv
\begin{exampleregion}
  \bf{Example.} Consider two processes, $P$ and $Q$, such that
  $\alpha_P = \{a, b\}$ and $\alpha_Q = \{x, y\}$.  Let $t_1\in \traces_P$
  and $t_2\in \traces_Q$ be two traces, where $t_1=\tuple{\{b\},\{a\}}$ and
  $t_2=\tuple{\{x\},\{y\},\{x\}}$.  Suppose that we wish to compose
  $P$ and $Q$ by mapping $a$ to $x$; i.e.,\ $m(a)= x$, and $m$ is undefined
	for all other labels. This
  mapping stipulates that every event containing $a$ as a label is to
  be assigned $x$ as an additional label in the resulting composite
  process. Then, the following is {\em not} a valid trace of the
  composition (even though its projections to $\alpha_P$ and $\alpha_Q$
	give traces that are valid elements of $P$ and $Q$ respectively):
\begin{align*}
\tuple{\{b\}, \{a\}, \{x\}, \{y\}, \{x\}} \notin \traces_{P \compo Q}
\end{align*}
The reason why the above trace is not in $P \compo Q$ is that its second event, $\{a\}$, is missing $x$, even though it contains $a$ and $a$ is mapped to $x$.
The next trace is valid:
\begin{align*}
\tuple{\{b\}, \{a, x\}, \{y\}, \{x\}} \in \traces_{P \compo Q}
\end{align*}
Note that the mapping is not symmetric, which means that although it is
necessary for $a$ to synchronize with $x$, it is not necessary for
$x$ to synchronize with $a$. Thus, the fourth event in the above
trace contains $x$ without $a$. The following is also a valid trace
of the composition:
\begin{align*}
\tuple{\{b\}, \{ x\}, \{y\}, \{a, x\}} \in \traces_{P \compo Q}
\end{align*}
Here, $\{a\}$ from $t_1$ is synchronized with the
second $\{x\}$ of $t_2$, while the first $\{x\}$ of $t_2$ is allowed to
take place by itself in the composition.
\end{exampleregion}

In a mapping composition $P\compo Q$,
$P$ is typically a high-level or {\em abstract} model of an
application design, while $Q$ is a model of a low-level {\em platform}
on which $P$ is to be implemented ($P$ and $Q$ may themselves consist
of several processes). In most cases, the two processes describe
system artifacts that are built independently from each other, and
thus do not have any common labels. The mapping $m$ captures decisions
on how events from $P$ are to be realized in terms of their
counterparts in $Q$. Consequently, $P \compo Q$ describes the result of
implementing $P$ on $Q$ using these decisions.

\fi

\begin{exampleregion}
\bf{Example.} 
Let $P = \Alice \parallel \Bob \parallel \Eve$ and $Q=\Sender \parallel \Receiver$ be the abstract and public channel
communication models from Figure~\ref{fig-example}, respectively.
In this model, labels are of the form $\writeBob{m}$ or $\encWrite{m}{k}$,
where $m\in\Msg$ and $k\in\Key$ are {\em parameters}.
The domains of such parameters can be potentially infinite, resulting in 
infinite sets of labels. In this case, for the sake of simplicity,
we restrict the domains $\Msg$ and $\Key$ to be finite:
\begin{align*}
\Msg = \{ \public, \secret \}\qquad\Key = \{ \none, \keyX, \keyY \} 
\end{align*}
This, in turn, results in finite sets of labels (and thus, also a
finite number of possible mappings):
\begin{align*}
& L_P = \{  \writeBob{\public}, \writeBob{\secret}, \writeEve{\public}, \writeEve{\secret} \} \\
& L_Q = \{ \encWrite{\secret}{\none}, \encWrite{\secret}{\keyX},
                                                                                                   \encWrite{\secret}{\keyY},
  \\
& \quad\quad\quad\ \encWrite{\public}{\none}, 
\encWrite{\public}{\keyX}, \encWrite{\public}{\keyY}
\} 
\end{align*}
Suppose that we decide on a naive implementation scheme where the
abstract messages \sf{writeBob} and \sf{writeEve} are transmitted
over the public channel unencrypted. This decision can be represented
as a mapping $m_1$ where each abstract label is mapped to \sf{encWrite} with
key $k = \none$:
\begin{align*}
m_1 =
& \{(\writeBob{\secret},\encWrite{\secret}{\none}),(\writeEve{\secret},\encWrite{\secret}{\none}), \\
&(\writeBob{\public},\encWrite{\public}{\none}), (\writeEve{\public},\encWrite{\public}{\none})\}
\end{align*}
Then, the process $\mparw{P}{Q}{m_1}$ contains traces that include the following event:
\begin{align*}
e = \{ \writeBob{\secret}, \encWrite{\secret}{\none} \}
\end{align*}
The labels of such a {\em multi-label} event correspond to its representations
at multiple abstraction layers; in this case, $e$ can be considered as both
a \sf{writeBob} and an \sf{encWrite} event. 

On the other hand, the singleton event $e' = \{ \writeBob{\secret} \}$ cannot appear in
any trace of $\mparw{P}{Q}{m_1}$, since the mapping $m_1$ requires
that each event containing $\writeBob{\secret}$ also contain
$\encWrite{\secret}{\none}$.
\end{exampleregion}

\ifdefined\arxiv
\subsection{Properties of Composition Operators}
\label{sec-properties-composition}

In this section, we state and prove various properties of the
parallel and mapping composition operators, extending prior
work~\cite{eunsuk-security} with new theoretical results.

\paragraph{Prefix closure.} Both the parallel and mapping composition operators
preserve the prefix-closure property of processes.
\begin{lemma}
\label{lem-mapping-prefix}
  Given a pair of processes $P$ and $Q$, $traces_{P \compo Q}$ is prefix-closed.
\end{lemma}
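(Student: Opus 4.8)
The plan is to unfold \textbf{Def.\ 2} and reduce prefix-closure of the composite process to the prefix-closure of $P$ and $Q$, which is given by assumption. Recall that $\traces_{P \compo Q}$ is the set of finite sequences $t$ over $events_{P \compo Q}$ with $(t \project \alpha_P) \in \traces_P$ and $(t \project \alpha_Q) \in \traces_Q$, so it suffices to track how these two conditions behave under taking prefixes.

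First I would record an auxiliary fact: projection distributes over concatenation, i.e.\ $(t_1 \cdot t_2) \project A = (t_1 \project A) \cdot (t_2 \project A)$ for all traces $t_1, t_2$ and all $A \subseteq L$. This follows by a routine induction on $t_1$ using the two defining clauses of $\project$. The special case I actually need is that for a single trailing event $e$, the trace $(t \cdot \tuple{e}) \project A$ equals $(t \project A) \cdot \tuple{e \cap A}$ when $e \cap A \neq \emptyset$, and equals $t \project A$ when $e \cap A = \emptyset$.

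Next I would take an arbitrary non-empty trace $t' = t \cdot \tuple{e} \in \traces_{P \compo Q}$ and show $t \in \traces_{P \compo Q}$. Since $t'$ is a finite sequence over $events_{P \compo Q}$, so is its prefix $t$, and it remains to verify the two projection conditions; I do the $\alpha_P$ side, the $\alpha_Q$ side being symmetric. From $t' \in \traces_{P \compo Q}$ we have $(t' \project \alpha_P) \in \traces_P$. By the auxiliary fact there are two cases: if $e \cap \alpha_P \neq \emptyset$, then $t' \project \alpha_P = (t \project \alpha_P) \cdot \tuple{e \cap \alpha_P}$ is a non-empty member of $\traces_P$, so prefix-closure of $P$ gives $(t \project \alpha_P) \in \traces_P$; if $e \cap \alpha_P = \emptyset$, then $t' \project \alpha_P = t \project \alpha_P$, so $(t \project \alpha_P) \in \traces_P$ immediately. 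In both cases $(t \project \alpha_P) \in \traces_P$, and symmetrically $(t \project \alpha_Q) \in \traces_Q$, hence $t \in \traces_{P \compo Q}$. The base case $\tuple{} \in \traces_{P \compo Q}$ is immediate, since $\tuple{} \project \alpha_P = \tuple{} \in \traces_P$ and likewise for $Q$.

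I do not expect a real obstacle here. The only points needing care are the distributivity-of-projection lemma and the case split on whether the trailing event meets the alphabet; note that the extra side conditions $cond'$ and $condmap$ of \textbf{Def.\ 2} play no role, since they only constrain membership in $events_{P \compo Q}$, which is trivially inherited by prefixes. An even slicker phrasing is to observe that $t \mapsto t \project A$ is prefix-monotone, so $\traces_{P \compo Q}$ is the intersection of $(events_{P \compo Q})^*$ with preimages of prefix-closed sets under prefix-monotone maps, and all of these are prefix-closed; but the explicit case split is probably the clearest write-up.
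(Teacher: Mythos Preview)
Your proposal is correct and follows essentially the same approach as the paper: take a non-empty trace $t' = t \cdot \tuple{e}$ in the composite, project onto each alphabet, and invoke prefix-closure of $P$ and $Q$. Your write-up is in fact more careful than the paper's, which silently assumes that $t \project \alpha_P$ is a prefix of $(t \cdot \tuple{e}) \project \alpha_P$ without spelling out the case split on whether $e \cap \alpha_P$ is empty; your explicit treatment of that point and your remark that $cond'$ and $condmap$ are irrelevant to prefix-closure are both welcome clarifications.
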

\begin{proof}
By definition, $\tuple{} \in \traces_{P \compo Q}$.

Let $t' = t \cdot \tuple{e} \in \traces_{P \compo Q}$. By the
definition of $traces$, $(t' \project \alpha_P) = (t \cdot
\tuple{e} \project \alpha_P)
\in P$. Since $P$ is prefix-closed, it must be the case that $(t
\project \alpha_P) \in
traces_P$ (similarly, $(t \project \alpha_Q) \in \traces_Q$). Thus,
it follows that $t \in \traces_{P \compo Q}$. 
\end{proof}

\begin{lemma}
\label{lem-parallel-prefix}
  Given a pair of processes $P$ and $Q$, $traces_{P \parallel Q}$ is prefix-closed.
\end{lemma}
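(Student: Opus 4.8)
The plan is to reduce this to the prefix closure of mapping composition, which is already in hand. By Lemma~\ref{lem-empty-mapping-parallel-compo}, instantiating the mapping with $m = \emptyset$ gives $P \parallel Q = P \compo Q$, and Lemma~\ref{lem-mapping-prefix} states that $\traces_{P \compo Q}$ is prefix-closed; composing the two facts immediately yields that $\traces_{P \parallel Q}$ is prefix-closed. Since this one-line reduction is the cleanest route, I would present it as the main proof, and optionally add the direct argument below for self-containedness (it is essentially a verbatim copy of the proof of Lemma~\ref{lem-mapping-prefix} with the $condmap$/$cond'$ bookkeeping deleted).

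For the direct argument, I would first note $\tuple{} \in \traces_{P \parallel Q}$: by \bf{Def.\ 1} this requires $(\tuple{} \project \alpha_P) \in \traces_P$ and $(\tuple{} \project \alpha_Q) \in \traces_Q$, which hold because $\tuple{} \project \alpha_P = \tuple{} \project \alpha_Q = \tuple{}$ and $P$, $Q$ are prefix-closed. Then I would take an arbitrary non-empty $t' = t \cdot \tuple{e} \in \traces_{P \parallel Q}$ and show $t \in \traces_{P \parallel Q}$. Membership in $(\events_{P \parallel Q})^*$ is clear since every event of $t$ is an event of $t'$, so only the two projection conditions remain. Here I would isolate the single auxiliary fact that projection distributes over concatenation, $(t_1 \cdot t_2) \project A = (t_1 \project A) \cdot (t_2 \project A)$, proved by a trivial induction on $t_1$ from the defining equations of $\project$. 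Using it, $(t' \project \alpha_P) = (t \project \alpha_P) \cdot (\tuple{e} \project \alpha_P)$, so $(t \project \alpha_P)$ is a prefix of $(t' \project \alpha_P) \in \traces_P$; prefix closure of $P$ gives $(t \project \alpha_P) \in \traces_P$, and symmetrically $(t \project \alpha_Q) \in \traces_Q$. By \bf{Def.\ 1}, $t \in \traces_{P \parallel Q}$, completing the induction implicit in the definition of prefix closure.

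There is no real obstacle here: the only point deserving care is the distributivity of $\project$ over $\cdot$ (equivalently, that a projection of a prefix is a prefix of the projection), after which prefix closure of the component processes transfers directly through \bf{Def.\ 1}. Given that Lemmas~\ref{lem-empty-mapping-parallel-compo} and~\ref{lem-mapping-prefix} are already available, I expect the shortest acceptable proof to be the two-line reduction, with the direct version included only if the authors prefer not to route a statement about $\parallel$ through the mapping machinery.
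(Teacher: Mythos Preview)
Your proposal is correct and matches the paper's own proof, which is exactly the one-line reduction you give: invoke Lemma~\ref{lem-mapping-prefix} with $m=\emptyset$ (via Lemma~\ref{lem-empty-mapping-parallel-compo}) to conclude prefix closure of $\traces_{P\parallel Q}$. Your optional direct argument is also fine but goes beyond what the paper does.
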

\begin{proof}
It follows as a special case of Lemma~\ref{lem-mapping-prefix} with $m
= \emptyset$.
\end{proof}
\fi

\para{Commutativity} The proposed mapping composition operator is
commutative: i.e.,\ $\mparw{P}{Q}{m} = \mparw{Q}{P}{m}$.  This
property can be inferred from the fact that \bf{Def.\ 2} is symmetric
with respect to $P$ and $Q$. It follows that by being a special case
of mapping composition, our parallel composition operator is also
commutative.

\ifdefined\arxiv
\para{Associativity} Our parallel composition is {\em not associative}
in general. For example, consider three processes, $P$, $Q$, and $R$,
with the following traces:
\begin{align*}
\traces_P=\{\tuple{\e{b,c}}\} \qquad \traces_Q=\{\tuple{\e{b}}\} \qquad \traces_R=\{\tuple{\e{c}}\}
\end{align*}
where $\alpha_P = \{ b, c \}$, $\alpha_Q = \{b \}$, and
$\alpha(R) = \{c\}$. If $P$ with $Q$ are first composed, and then
consequently with 
$R$, we obtain:
\begin{align*} 
\traces_{P \parallel Q} = \{\tuple{\e{b,c}} \} \quad \mbox{ and } \quad
\traces_{(P \parallel Q) \parallel R} = \{\tuple{\e{b,c}} \}
\end{align*}
However, if $Q$ and $R$ are composed first, then
$\e{b}$ and $\e{c}$ are interleaved as two distinct events.
Consequently, when $(Q \parallel R)$ is further composed with $P$, $\e{b,c}$
cannot take place in the final composition:
\begin{align*}
\traces_{Q\parallel R} = \{\tuple{\e{b}, \e{c}}, \tuple{\e{c}, \e{b}}\}
\quad \mbox{ and } \quad
\traces_{P \parallel (Q \parallel R)} = \emptyset
\end{align*}
\else
\fi

\ifdefined\arxiv
The mapping composition $\compo$ is also not associative in general,
since $\parallel$ itself is not associative; in fact, $\compo$ is
not associative in general even when the mapping $m$ is not empty.
For example, consider three processes, $P$, $Q$, and $R$, with the
traces:
\begin{align*}
\traces_P = \{ \tuple{\{a\}} \} \qquad \traces_Q = \{ \tuple{\{b\}}\} \qquad \traces(R) = \{\tuple{\{c\}}\} 
\end{align*}
where $\alpha_P = \{a\}$, $\alpha_Q = \{b\}$, and $\alpha(R) =
\{c\}$.
Let 
$m_1 = \{(a,b)\}$ and $m_2 = \{(a, c)\}$.
Then, 
$
\mparw{(\mparw{P}{Q}{m_1})}{R}{m_2} \neq \mparw{P}{(\mparw{Q}{R}{m_2})}{m_1}.
$
On the right hand side, when $Q$ and $R$ are composed under $m_2$,
$\{b\}$ and $\{c\}$ are interleaved as separate events in the
resulting process, since neither $b$ or $c$ appears in the domain of
$m_2$:
\begin{align*}
\traces(\mparw{Q}{R}{m_2}) = \{\tuple{\e{b}, \e{c}}, \tuple{\e{c}, \e{b}}\}
\end{align*}
When we further compose this process with $P$ under $m_1$,
the resulting process will contain traces where $a$ is merged
with $b$; e.g.,
\begin{align*}
\traces(\mparw{P}{(\mparw{Q}{R}{m_2})}{m_1}) = \{ \tuple{\{a,b\}, \{c\}}, \tuple{\{c\}, \{a,b\}} \}   
\end{align*}
On the other hand, if $P$ is first composed with $Q$ under $m_1$, and then
subsequently with $R$ under $m_2$, the resulting process has a trace where
an event contains all of $a$, $b$, and $c$:
\begin{align*}
\traces(\mparw{P}{Q}{m_1}) = \{ \tuple{a, b} \} \qquad \traces(\mparw{(\mparw{P}{Q}{m_1})}{R}{m_2}) = \{\tuple{\{a,b, c\} } \}
\end{align*}

Despite the above negative general results, we can show that our
composition operators are associative under certain conditions on the
alphabets of involved processes and mappings.  \else
\para{Associativity} Our mapping composition is associative under
the following conditions on the alphabets of involved processes and
mappings.
\fi
\begin{theorem}
\label{theorem-mapping-associativity}
Given processes $P$, $Q$, and $R$, let $X = \mparw{(\mparw{P}{Q}{m_1})}{R}{m_2}$ and $Y = \mparw{P}{(\mparw{Q}{R}{m_4})}{m_3}$. If $\events(X) = \events(Y)$,
then $X=Y$.
\end{theorem}
\ifdefined\arxiv
\begin{proof}
Found in Appendix~\ref{appen-thm-associativity}.
\end{proof}

\red{Stavros: what about the public channel example? does it satisfy these conditions? If not, then what does $\Alice \parallel \Bob \parallel \Eve$ mean?
If yes, is it because they only have single labels? We need then to state
an extra theorem which states associativity for processes with single label
events.}
\else
\begin{proof}
Available in the extended version of this paper~\cite{mapping-arxiv}.
\end{proof}
\fi
\ifdefined\techreport
\input{operational-semantics}
\fi

\input{modeling-interaction}
\section{Verification Problems}
\label{sec-mapping-verification}

\ifdefined\arxiv
Our framework allows to formulate several interesting problems related
to mappings. One such problem is {\em mapping verification}: given
$P,Q,m$, and a property $\phi$, check whether $P\compo Q$ satisfies
$\phi$.  This verification problem has already been studied
in~\cite{eunsuk-security}, although not formulated explicitly as such,
and not formalized completely.  Here, we propose a complete
formalization of the mapping verification problem, and in particular
clarify the relation between the specification of properties at the
high-level design and at the low-level implementation.  At the same
time, we prepare the ground for the formulation of the synthesis
problems that follow in Section~\ref{sec-synthesis-problems}.

\subsection{Specification}

Before defining the mapping verification problem, let us explain how
our framework allows the specification of properties at different
levels of abstraction seamlessly.  Recall that the goal of our
methodology is to allow the system designer to express a desired
property on a high-level design $P$ and preserve the same property as
the design is implemented on a low-level platform ($P \compo Q$).
Therefore, ideally, we would like to be able to do the following: (1)
Model the high-level design as process $P$.  (2) Express the desired
properties as a specification $\phi$.  (3) Check that the design
satisfies its specification, i.e., check that $P\models \phi$.  (4) Model
the execution platform as process $Q$.  (5) Express the implementation
decisions as a mapping $m$.  (6) Check that the original
implementation is not violated in the implemented system, i.e., check
that $(P \compo Q)\models \phi$.

Unfortunately, we cannot immediately follow the steps above and
``reuse'' the original specification $\phi$ in the mapping verification
problem. The reason is that the high-level process $P$ contains traces
which are of a different type from the traces of the implementation
model $P \compo Q$. Indeed, $P$ is a subset of $T(2^{\alpha(P)})$,
whereas $P \compo Q$ is a subset of
$T(2^{\alpha(P)\cup\alpha(Q)})$. Also, the specification $\phi$ should
ideally be {\em implementation independent}, which means that $\phi$
should be of the same type as $P$, i.e., a set of traces in
$T(2^{\alpha(P)})$. Therefore, $\phi$ is a-priori incomparable with
$P \compo Q$, and the problem of checking $(P \compo Q)\models \phi$
becomes ill-defined.  \red{one may ask here: why not simply project
  $P \compo Q$ to $\alpha(P)$, i.e., instead of checking
  $(P \compo Q)\models \phi$, check that
  $((P \compo Q) \project \alpha(P))\models \phi$. then perhaps we don't
  need the discussion below? (Eunsuk: In many cases, yes, this is what
  we want, but not in general. For example, we might have a spec that
  talks about the cardinality of events -- e.g., ``event with label x
  cannot contain another label''. We want to be able to model
  systems where this type of spec is violated by a mapping; and
  the projection solution doesn't achieve this.)}

We take the following approach to resolve this incompatibility. In
this paper, we are interested in specifications that are expressible
as a \textit{trace property}~\cite{schneider-liveness} (i.e., a set of
traces describing desirable system behaviors). An initial property
specified over $P$ would contain traces that are expressed only in
terms of the alphabet of $P$; let us label this property $X$. Once $P$
is realized as part of $Q$ through mapping $m$, some subset of the
traces of the resulting process ($P \compo Q$) would capture behaviors
that are consistent with $X$ and expressed over the alphabets of both
$P$ and $Q$; let us refer to the largest such subset as $X'$. In other
words, $X'$ is the property that captures the designer's intent in the
platform implementation $P \compo Q$.

Our goal is to allow the designer to specify only an initial
specification that yields $X$, and have $X'$ derived \textit{automatically}
from the same specification when $P$ is mapped to $Q$. To achieve
this, we treat \textit{specification} $\phi$ as a syntactic object that
is constructed in terms of some given set of labels $L_{\phi}$. To obtain
its meaning as a property, $\phi$ is evaluated with respect to some set
of labels $L$. The evaluation function, $sem$ is defined as follows:
\begin{align*}
sem(\phi, L) = 
\begin{cases}
\{ t \in T(L) | sat(t, \phi) \} & \text {if } L_{\phi} \subseteq L \\
\text{undefined} &   \text{ otherwise}
\end{cases}
\end{align*}
where $sat$ returns true if and only if trace $t$ satisfies the
specification $\phi$. Note that $sem$ requires only that the input set
$L$ contains at least those labels over which $\phi$ is expressed; in
other word, $L$ may contain labels that are not mentioned by $\phi$. This
allows us to construct properties over multiple levels of system
abstraction from a single specification $\phi$:
\begin{align*}
X = sem(\phi, \alpha(P)) \qquad X' = sem(\phi, \alpha(P \compo Q))
\end{align*}
The meaning of $sat$ itself depends on the type of specification
$\phi$. For instance, if $\phi$ is a first-order logic predicate with free
variable $t$ that corresponds to trace $t$, then we can define the
meaning of $sat$ simply as:
\begin{align*}
sat(t, \phi) \iff \phi(t)
\end{align*}


Given the above notion of specification, we can now define what it
means for a system, modeled as a process, to satisfy its
specification:
\begin{align*}
P \models \phi \iff P \subseteq sem(\phi,\alpha(P))
\end{align*}
In other words, the traces allowed by $P$ must be a subset of the
property that is constructed from $\phi$ and the alphabet of $P$.

\begin{exampleregion}
  \bf{Example.} Recall the example from Figure~\ref{fig-example},
  where the desired specification that \sf{Eve} never learns the
  secret of \sf{Alice} may be stated as the first-order logic formula
\begin{align*}
\neg \exists e \in events(t) : \sf{writeEve}(\secret)  \in e
\end{align*}
where $t$ represents a trace over the set of all labels $L$. Let $\phi$
refer to this formula, $P$ the abstract model in
Figure~\ref{fig-example}(a), and $Q$ the concrete model in
Figure~\ref{fig-example}(b). Then, the following trace (of length 2),
which describes \Alice sending a secret to \Bob, and then a public
message to \Eve, is a valid behavior according to the specification,
since it does not result in \Eve learning the secret:
\begin{align*}
\tuple{\e{\sf{writeBob(secret)}}, \e{\sf{writeEve(public)}}}\in sem(\phi, \alpha(P))
\end{align*}
Similarly, the following trace, expressed over the labels that appear
in $P$ and $Q$, is also a valid behavior as defined by $\phi$:
\begin{align*}
\tuple{\e{\sf{writeBob(secret)}, \encWrite{\secret}{\keyX}}, 
  \e{\sf{writeEve(public)}, \encWrite{\public}{\none}}}\in sem(\phi, \alpha(P \compo Q))
\end{align*}
because executing this sequence of events does not lead to state \sf{SecretLearned}.
\end{exampleregion}

\subsection{Verification} 
\fi

In formal methods, a typical verification problem involves, given a
process $P$ and a property $\phi$, checking whether $P$ satisfies $\phi$
(i.e., $P\models \phi$).  In our framework, an additional verification
problem is defined as follows:
\begin{problem}[Mapping Verification]
Given processes $P$ and $Q$, mapping $m$, and property $\phi$, check whether
$(P \compo Q) \models \phi$.
\end{problem}
We say a mapping $m$ is {\em valid} (w.r.t. $P$, $Q$ and $\phi$) if and
only if $(P \compo Q)\models \phi$. When solving the mapping verification
problem, we may rely on the assumption that $P$ alone satisfies the
specification, i.e., that $P\models \phi$ holds (which could be checked
beforehand). Note that this problem is conceptually an
$\exists$-problem, as it can be stated as finding a witness trace $t$
to formula $\exists t : t \in (P \compo Q) \land t\not\in \phi$.

\ifdefined\arxiv
Such problems can be answered in a non-exhaustive manner using
satisfiability solvers. By ``non-exhaustive'' we mean that when the
state-space is infinite, the solver typically gives an incomplete
answer, by exploring only a finite subset of the state-space. 

Another type of analysis problem, also introduced
in~\cite{eunsuk-security}, can be stated on mappings as follows:
\begin{problem}[Insecure Mapping Search]
Given processes $P$ and $Q$, and property $\phi$, find a mapping $m$ such that
($P \compo Q) \not\models \phi$.
\end{problem}
Such a mapping $m$, if it exists, describes decisions that may introduce
undesirable behavior into the resulting implementation, undermining a
property $\phi$ that has been previously established by $P$. Note that
this problem is a {\em search} rather than a {\em synthesis} problem,
since it can be formulated as an $\exists$-problem:
$\exists m, \exists t : t \in (P \compo Q) \land t\not\in \phi$.
 \fi

\begin{exampleregion}
\bf{Example.} Recall the mapping $m_1$ from
Section~\ref{sec-modeling-framework}, where abstract messages are
transmitted as unencrypted public messages. Let $\Alice'$ and $\Eve'$
be processes that represent the implementation of $\Alice$ as
$\Sender$ using $m_1$ (i.e.,\ $\Alice \|_{m_1}\Sender$), and $\Eve$ as
$\Receiver$, respectively. The following $t_1$ and $t_2$ are valid traces of these two
processes:
\begin{align*}
t_1 = \tuple{\{\writeBob{\secret}, \encWrite{\secret}{\none}\} } \in \Alice' \\
t_2 = \tuple{\{\writeEve{\secret}, \encWrite{\secret}{\none}\} } \in \Eve'
\end{align*}
When composed in parallel, $\Alice'$ and $\Eve'$ may interact through
the above two events, since they share the common label
$\sf{encWrite}$ $(\secret,\none)$. According to the definition (\bf{1})
from Section~\ref{sec-modeling-framework}, the following is a valid
trace of their parallel composition:
\begin{align*}
&\tuple{
\{\writeBob{\secret}, \writeEve{\secret}, \encWrite{\secret}{\none}\}} \in (\Alice' \| \Eve')
\end{align*}
Note that in Figure~\ref{fig-example}(a), this event results in the
transition of \Eve into state \sf{SecretLearned}, leading to a
violation of the desired property.  This
can be seen as an example of abstraction violation: As a result of
design decisions in $m_1$, $\sf{writeBob}$ and $\sf{writeEve}$ now share
the same underlying representation ($\sf{encWrite}$), and $\Eve'$ is now
able to engage in an event that was not previously available to it in
the original abstract model.
\end{exampleregion}

\section{Synthesis Problems}
\label{sec-synthesis-problems}



In this paper, we propose the novel problem of \emph{synthesizing} a mapping
that preserves a desired property $\phi$ between $P$ and $Q$.
\begin{problem}[\bf{Mapping Synthesis}]
Given processes $P$ and $Q$, and property $\phi$, find, if it exists,
a mapping $m$ such that $(P \compo Q) \models \phi$.
\label{problem-mapping-synthesis}
\end{problem}
Note that this problem can be stated as a $\exists\forall$ problem;
that is, finding a witness $m$ to the formula
$\exists m : \forall t : t \in (P \compo Q) \implies t \in \phi$.



Instead of synthesizing $m$ from scratch, the developer may often wish
to express her partial system knowledge as a given
\textit{constraint}, and ask the synthesis tool to generate a mapping
that adheres to this constraint. For instance, given labels
$a, b, c \in L$, one may express a constraint that $a$ must be mapped
to either $b$ or $c$ as part of every valid mapping; this gives rise
to two possible candidate mappings, $m_{1}$ and $m_{2}$, where
$m_{1}(a) = b$ and $m_{2}(a) = c$. Formally, let $M$ be the set of all
possible mappings between labels $L$. A \textit{mapping constraint}
$C \subseteq M$ is a set of mappings that are considered legal
candidates for a final, synthesized valid mapping.  Then, the problem
of synthesizing a mapping given a constraint can be formulated as
follows:
\begin{problem}[\bf{Generalized Mapping Synthesis}]
  Given $P$ and $Q$, property $\phi$, and mapping constraint
  $C$, find, if it exists, a mapping $m$ such that $m \in C$ and
  $(P \compo Q) \models \phi$.\label{problem-generalized-mapping-synthesis}
\end{problem}
Note that Problem~\ref{problem-mapping-synthesis} is a special case of
Problem~\ref{problem-generalized-mapping-synthesis} where $C = M$.
\ifdefined\arxiv
\begin{exampleregion}
\bf{Example.} Consider a pair of processes, $P$ and $Q$, such that
$\alpha(P) = \{a, b \}$ and $\alpha(Q) = \{x, y, z\}$. Suppose that the
developer wishes to map $a$ to either $x$ or $y$, and $b$ to $z$; this
can be specified as the following constraint:
\begin{align*}
C = \{m_1, m_2\} \quad m_1 = \{ (a, x), (b, z) \} \quad m_2 = \{ (a, y),
  (b, z)\}
\end{align*}
Suppose instead that the developer's intent is to map $a$ to any one
of the labels in $Q$, but not map $b$ to anything; this can be
specified as
\begin{align*}
C = \{m_1, m_2, m_3\} \quad m_1 = \{ (a, x) \} \quad m_2 = \{ (a, y) \}
  \quad m_3 = \{ (a, z) \}
\end{align*}
\end{exampleregion}
\fi
The synthesis problem can be further
generalized to one that involves synthesizing a constraint that contains
a \textit{set} of valid mappings:
\begin{problem}[\bf{Mapping Constraint Synthesis}]
  Given $P$ and $Q$, property $\phi$, and mapping
  constraint $C$, generate a non-empty constraint $C'$ such that
  $C' \subseteq C$, and for every $m \in C'$,
  $(P \compo Q) \models \phi$.
\label{problem-constraint-synthesis}
\end{problem}
If it exists, $C'$ yields a set of valid mappings that adhere to the
given constraint $C$ (i.e., $C'$ is a stronger constraint than
$C$). We call such constraint $C'$ \textit{valid} with respect to $P$,
$Q$, $\phi$ and $C$. A procedure for solving
Problem~\ref{problem-constraint-synthesis} can in turn be used to
solve Problem~\ref{problem-generalized-mapping-synthesis}: Having
generated constraint $C'$, we pick some mapping $m$ such that
$m \in C'$.





In practice, it is desirable for a synthesized constraint to be as
large as possible while still being valid, as it provides more
implementation choices (i.e., possible mappings). The
problem of synthesizing a \emph{maximal} mapping constraint is defined as:
\begin{problem}[\bf{Maximal Constraint Synthesis}]
  Given processes $P$ and $Q$, property $\phi$, and given
  constraint $C$, generate a constraint $C'$ such (1) $C'$ is valid
  with respect to $P$, $Q$, $\phi$, and $C$,
  and (2) there exists no other constraint $C''$ such that $C''$ is valid and
  $C' \subseteq C''$. 
\label{problem-optimal-constraint-synthesis}
\end{problem}
Note that $C'$, if found, is a \textit{local} optimum. In general, there may be
multiple local optimal constraints for given $P$, $Q$, $\phi$, and $C$.

\begin{exampleregion} \textbf{Example.} In our running example, suppose that there are two \Receiver
  processes with the identical behavior ($\ReceiverX$ and
  $\ReceiverY$), except that they are assigned unique decryption keys:
  $\knows(\ReceiverX)$ $=$ $\{\keyX\}$ and $\knows(\ReceiverY)$ $=$ $\{\keyY\}$.
Then, the following $m_2$ is a valid mapping
that preserves the desired specification that $\Eve$ never learns the secret:
\begin{align*}
&\writeBob{\secret} \mapsto_{m_2} \encWrite{\secret}{\keyX}, \ \writeBob{\public} \mapsto_{m_2} \encWrite{\public}{\keyX} \\
&\writeEve{\secret} \mapsto_{m_2} \encWrite{\secret}{\keyX}, \ \writeEve{\public} \mapsto_{m_2} \encWrite{\public}{\keyY} 
\end{align*}
That is, if the secret message are always encrypted with the key assigned
to $\Bob$, $\Eve$ will never be able to read it. The following $m_3$
is also a a valid mapping:
\begin{align*}
&\writeBob{\secret} \mapsto_{m_3} \encWrite{\secret}{\keyX}, \ \writeBob{\public} \mapsto_{m_3} \encWrite{\public}{\none} \\
&\writeEve{\secret} \mapsto_{m_3} \encWrite{\secret}{\keyX}, \ \writeEve{\public} \mapsto_{m_3} \encWrite{\public}{\none} 
\end{align*}
since $\Eve$ being able to read public messages does not violate the
property. Thus, the developer may choose either $m_2$ or $m_3$ to
implement the abstract channel and ensure that $\secret$ remains
protected from $\Eve$. In other words, $C_1 = \{ m_2, m_3 \}$ is a
valid (but not necessarily maximal) mapping constraint with respect to
the desired property. Furthermore, $C_1$ is arguably more desirable
than another constraint $C_2 = \{m_2\}$, since the former gives the
developer more implementation choices than the latter does.
\end{exampleregion}

\section{Synthesis Technique}
\label{sec-synthesis-technique}


\ifdefined\techreport
\subsection{Symbolic Mapping Representation}
\label{sec-mapping-representation}
\else \para{Mapping representation}
\fi
Assuming that the set of event labels $L$ is finite, one possible way
to represent a mapping is by \textit{explicitly} listing all of the
entries in the function. An alternative representation is one where
mappings are represented \textit{symbolically} as logical expressions
over variables that correspond to labels being mapped. The
symbolic representation has the following advantages: (1) it provides
a succinct representation of implementation decisions to the developer
(which is especially important as the size of the mapping grows large)
and (2) it allows the user to specify partial decisions
(i.e.,\ given constraint $C$) in a \textit{declarative} manner.

In particular, we use a \textit{grammar-based} approach, where the
space of candidate mapping constraints is restricted to expressions
that can be constructed using a syntactic grammar~\cite{sygus}. Our
grammar is described as follows\ifdefined\techreport\footnote{However, our synthesis algorithm is not tied to a
  particular grammar, and developers may construct their own grammar based on
  the insights from the problem domain.}\fi:
\begin{align*}
& Term := Var | Const \qquad Assign := (Term = Term)  \\
    & Expr := Assign | \neg Assign | Assign \implies Assign | Expr \land Expr
  \end{align*}
  where $Var$ is a set of variables that represent parameters inside a
  label, and $Const$ is the set of constant values.  Intuitively, this
  grammar captures implementation decisions that involve assignments
  of parameters in an abstract label to their counterparts in a
  concrete label (represented by the equality operator ``$=$''). A
  logical implication is used to construct a conditional assignment of
  a parameter. For example, expression $\mathcal{X}_{\textsf{writeBob}}(a, b)$ $\equiv$
  $a.m = b.m$ $\land$ $(b.m = \secret \implies b.k = \keyX)$ states
  that if the message being transmitted is a secret, key $\keyX$ must
  be used ($a$ and $b$ are variables representing 
  \sf{writeBob} and \sf{encWrite} labels, respectively).

In general, given an expression, the constraint that it represents is
computed as:
\begin{align*}
C = \{ m : L \fun L | \forall a \in L : & (\exists b \in L : \mathcal{X}(a, b))
  \implies \mathcal{X}(a, m(a)) \land \\ 
& \neg (\exists b \in L : \mathcal{X}(a, b)) \implies a \notin dom(m) \}
\end{align*}
That is, a mapping $m$ is allowed by $C$ if and only if for each
label $a$, (1) if there is at least another label $b$ for which
$\mathcal{X}(a,b)$ evaluates to true, then $m$ maps $a$ to one of such labels,
and (2) otherwise, $m$ is not defined over $a$.

\ifdefined\techreport
In particular, we use a \textit{grammar-based} approach, where the
space of candidate mapping constraints is restricted to expressions
that can be constructed using a syntactic grammar~\cite{sygus}. For
the systems that we have studied, we use the following
grammar\footnote{However, our synthesis algorithm is not tied to a
  particular grammar, and developers may construct their own grammar based on
  the insights from the problem domain.}:
\begin{align*}
    & Term := Var | Const \qquad Assign := (Term = Term) \\
    & Expr := Assign | Assign \implies Assign | Expr \land Expr
  \end{align*}
  where $Var$ is a set of variables that represent parameters inside a
  label, and $Const$ is the set of constant values.  Intuitively,
  this grammar captures implementation decisions that involve
  assignments of parameters in an abstract label to their counterparts
  in a concrete label (represented by the equality operator
  ``$=$''). A logical implication is used to construct a conditional
  assignment of a parameter.

Each expression $\mathcal{X} \in Expr$ is a formula that describes a
relationship between a pair of labels (e.g.,\ $(a,b)$) that may appear in a
mapping (i.e.,\ $m(a) = b$). To
define what it means for a label pair to satisfy an expression, let us
introduce the notion of \textit{signatures}, each of which describes
the structure of a particular group of labels. Formally, a signature
$\sigma$ is a tuple $(params, typ)$ where $params \in \power(Var)$ are
variables representing label parameters, and $typ : Var \fun Typ$ is a
function that maps each variable to its type.

Each label $a \in L$ is associated with exactly one signature (denoted
$sig(a)$). For instance, given label $a = \encWrite{\secret}{\keyX}$,
$sig(a) = \sigma_{\sf{encWrite}}$ is the signature that describes the
structure of all \sf{encWrite} labels. In particular, here
$params= \{m, k\}$ and $typ = \{ (m, \sf{Msg}), (k, \sf{Key})\}$,
where $m$ and $k$ are variables that represent the message and key
associated with each \sf{encWrite} label. Furthermore, every label
$a \in L$ assigns a concrete value to each parameter that appears in
its signature. For example, given label
$a = \encWrite{\secret}{\keyX}$, $a.m = \secret$, and $a.k = \keyX$,
where \sf{secret} and \sf{keyX} are constant values that represent a
secret message and a particular key\footnote{For shorthand, we use the
  notation $a.v$ to refer to the variable $v$ of $sig(a)$.}.

Separate expressions may be used to describe relationships between
labels of different signatures. Given a pair of labels $a$ and $b$,
let $\sigma_a$ and $\sigma_b$ be their respective signatures, and let
$\mathcal{X}_{(\sigma_a,\sigma_b)}$ be an expression that describes a
relationship between parameters of these signatures. Then, a mapping
constraint $C$ can be defined in terms of expressions as follows:
\begin{align*}
C = \{ m : L \fun L | \forall a \in L :  &\  (\exists b \in L : E_{(\sigma_a,\sigma_b)}(a, b))                                         
  \implies \mathcal{X}_{(\sigma_a,\sigma_b)}(a, m(a)) \land \\ 
& \neg (\exists b \in L : \mathcal{X}_{(\sigma_a,\sigma_b)}(a, b)) \implies a \notin dom(m) \} 
\end{align*}
That is, if expression $\mathcal{X}_{(\sigma_a,\sigma_b)}$ evaluates to true
over some pair $(a, b)$, then each $m \in C$ must be defined on $a$, and
the pair $(a, m(a))$ must also satisfy $\mathcal{X}$; otherwise, $m$ cannot be
defined over $a$ (i.e., the expression does not allow $a$ to be mapped
to any other label).
\fi



\begin{exampleregion}
\bf{Example.} In our running example, the signatures associated with
event labels are as follows:
\begin{align*}
\sf{writeBob}(m: \Msg)\ \sf{writeEve}(m: \Msg)\ \sf{encWrite}(m:
  \Msg, k: \Key)
\end{align*}
Separate constraint expressions are used to capture the mapping
between \sf{writeBob} and \sf{encWrite}, and \sf{writeEve} and
\sf{encWrite}; we will call these expressions $\mathcal{X}_{\sf{writeBob}}$ and
$\mathcal{X}_{\sf{writeEve}}$, respectively.  One possible mapping constraint
that preserves the secrecy property may be expressed as follows:
\begin{align*}
E_{\sf{writeBob}} (a, b) \equiv\ & a.m = b.m \land (b.m = \secret \implies b.k = \keyX) \\
E_{\sf{writeEve}} (a, b) \equiv\ & a.m = b.m \land (b.m = \secret
                                   \implies b.k = \keyX)
\end{align*}
Informally, these expressions stipulate that (1) the messages
associated with the high-level and low-level events are identical, and
(2) if the message being transmitted is a secret, the key associated
with $\Bob'$ must be used. Note that this constraint does not specify
which keys should used to encode public messages, since this decision
is not relevant to the secrecy property. This under-specification is
desirable, since it gives more implementation freedom to the developer
while preserving the desired property at the same time.
\end{exampleregion}

\ifdefined\techreport
\subsection{Algorithm}
\label{sec-algorithm}

In this section, we describe an algorithm that produces a maximal valid constraint as a solution to
Problem~\ref{problem-optimal-constraint-synthesis} in
Section~\ref{sec-synthesis-problems}.
\else
\para{Algorithmic considerations}
\fi



To ensure that the algorithm terminates, the set of expressions that
may be constructed using the given grammar is restricted to a finite
set, by bounding the domains of data types (e.g.,\ \sf{Msg} and
\sf{Key} in our example) and the size of expressions.  We
also assume the existence of a verifier that is capable of checking
whether a candidate mapping satisfies a given property $\phi$. The
verifier is used to implement function $verify(C,P,Q,\phi)$, which
returns $OK$ if and only if every mapping allowed by constraint $C$
satisfies $\phi$.

\para{Naive algorithm} Once we limit the number of candidate
expressions to be finite, we can use a brute-force algorithm to
enumerate and check those candidates one by one. Unfortunately, this
naive algorithm is likely to suffer from scalability issues, as our
experimental results indeed indicate (see
Section~\ref{sec-case-study}, Figure~\ref{fig-results}).


\para{Generalization algorithm} We present an improved algorithm that
takes a generalization-based approach to dynamically identify and
prune undesirable parts of the search space. A
key insight is that only a few implementation decisions---captured by
some \textit{minimal subset} of the entries in a mapping---may be
sufficient to imply that the resulting implementation will be
invalid. Thus, given some invalid mapping, the algorithm
attempts to identify this minimal subset and construct a larger
constraint $C_{bad}$ that is guaranteed to contain only invalid
mappings.

\ifdefined\techreport
\begin{figure}[t]
\centering
\includegraphics[width=0.30\textwidth]{diagrams/maximization.pdf}
\caption{\small{An illustration of generalization of $C$ to
    $C_{bad}$. The red circles represent regions containing only invalid
    mappings.}}
\label{fig-generalization}
\end{figure}
\fi


The outline of the algorithm is shown in
\ifdefined\techreport
Algorithm~\ref{alg-synthesis}
\else Fig.~\ref{fig-algorithm}\fi. The function $synthesize$ takes four
inputs: a pair of processes, $P$ and $Q$, a desired specification $\phi$,
and a user-specified constraint $C_0$. It also stores a set of
constraints, $X$, which keeps track of ``bad'' regions of the search
space that do not contain any valid mappings.

In each iteration, the algorithm selects some mapping $m$ from $C_0$ (line 3) and
checks whether it belongs to one of the constraints in $X$ (meaning,
the mapping is guaranteed to result in an invalid implementation). If
so, it is simply discarded (lines 4-5).

Otherwise, the verifier is used to check whether $m$ is valid with
respect to $\phi$ (line 7). If so, then $generalize$ is invoked 
to produce a maximal mapping constraint $C_{maximal}$, which represents the
largest set that contains $\{m\}$ and is valid with respect to $\phi$
(line 9). If, on the other hand, $m$ is invalid (i.e.,\ it 
fails to preserve $\phi$), then $generalize$ is invoked
to compute the largest superset $C_{bad}$ of $\{m\}$ that contains only
invalid mappings (i.e., those that satisfy $\neg \phi$).  The set
$C_{bad}$ is then added to $X$ and used to prune out subsequent,
invalid candidates (line 13).

\SetKwProg{fun}{fun}{}{end}

\ifdefined\techreport
\begin{algorithm}[!t]
\DontPrintSemicolon
  \fun{synthesize(P, Q, $\phi$, $C_0$)}{
    $X = \{ \}$\;
    \For{$m \in C_0$}{
    \If{$\exists C_{bad} \in X : m \in C_{bad}$}{
      \bf{skip}
     }
    $result \gets verify(\{m\},P,Q,\phi)$\;
    \eIf{$result$ = $OK$}{
      $C_{maximal} \gets generalize(\{m\},P,Q,\phi,C_0)$\;
      \Return{$C_{maximal}$}\;
     } {
       $C_{bad} \gets generalize(\{m\},P,Q,\neg \phi,C_0)$\;
       $X \gets X \cup \{ C_{bad} \}$\;
     }
     }
     \Return{\textnormal{none}}\;
   }
\fun{generalize(C, P, Q, $\phi$, $C_0$)}{
  $K \gets decompose(C)$\;
  \For{k $\in K$}{
    $C' \gets relax(C, k)$\;
    $result \gets verify(C',P,Q,\phi)$\;
    \If{$result$ = $OK \land C' \subseteq C_0$ }{
      $C \gets C'$
    } 
    }
    \Return{$C$}
}
 \caption{An algorithm for synthesizing a maximal mapping constraint.}
 \label{alg-synthesis}
\end{algorithm}
\else
\begin{figure}[t]
\centering
\includegraphics[width=0.85\textwidth]{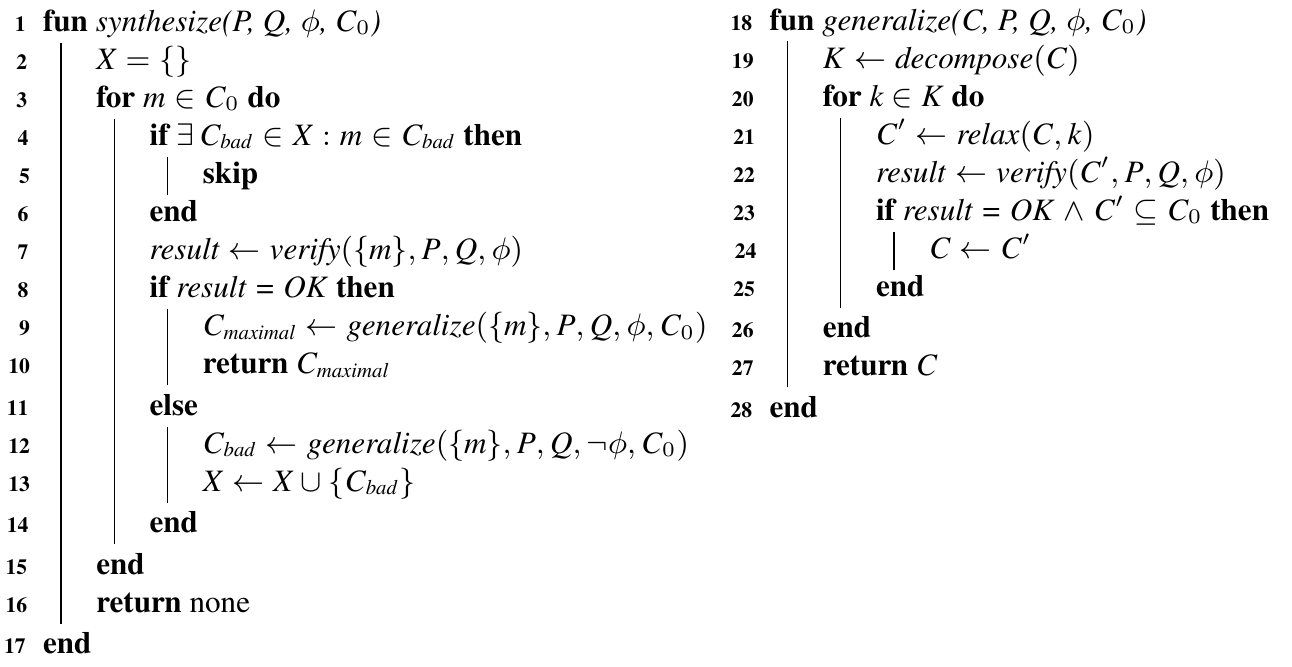}
\vspace{-2.5mm}
\caption{\small{An algorithm for synthesizing a maximal mapping constraint.}}
\label{fig-algorithm}
\vspace{-5.0mm}
\end{figure}
\fi

\para{Constraint generalization} The function $generalize(C,P,Q,\phi,C_0)$
computes the largest set that contains $C$ and only permits mappings
that satisfy $\phi$. This function is used both to identify an undesirable
region of the candidate space that should be avoided, and also to produce
a maximal version of a valid mapping constraint.

The procedure works by incrementally growing $C$ into a larger set
$C'$ and stopping when $C'$ contains at least one mapping that
violates $\phi$. Suppose that constraint $C$ is represented by a symbolic
expression $\mathcal{X}$, which itself is a conjunction of $n$ subexpressions
$k_1 \land k_2 \land ... \land k_n$, where each $k_i$ for
$1 \leq i \leq n$ represents a (possibly conditional) assignment of a
variable or a constant to some label parameter. The function
$decompose(C)$ takes the given constraint and returns the set of such
subexpressions\footnote{We assume that a symbolic expression
  representing constraint $C$ is available to $decompose$.}. The function $relax(C,k_i)$ then computes a new
constraint by removing $k$ from $C$; this new constraint, $C'$, is a
larger set of mappings that subsumes $C$.

The verifier is then used to check $C'$ against $\phi$ (line 22). If $C'$
is still valid with respect to $\phi$, it implies that the implementation
decision encoded by $k$ is irrelevant to $\phi$, meaning we can safely
remove $k$ from the final synthesized constraint $C$ (line 24). If
not, $k$ is retained as part of $C$, and the algorithm moves onto the
next subexpression $k$ as a candidate for removal (line 20).

\ifdefined\arxiv \begin{exampleregion} \fi \textbf{Example.} Back to our running example, one candidate
  constraint $C$ (line 3) is represented in part by the following expression
  (for mappings from \sf{writeEve} to \sf{encWrite}):
\begin{align*}
\mathcal{X} (a, b) \equiv\  a.m = b.m \land (b.m = \secret
                                   \implies b.k = \keyX) \land (b.m =
                                   \public \implies b.k = \keyY)
\end{align*}
The verifier returns $OK$ after checking $C$ against $P$, $Q$, and $\phi$
(line 7), meaning $C$ is a valid constraint. Next, the
generalization procedure removes the subexpression
$k_1 = (b.m = \public \implies b.k = \keyY)$ from $C$, resulting in
constraint $C'$ that is represented as:
\begin{align*}
\mathcal{X} (a, b) \equiv\  a.m = b.m \land & (b.m = \secret
                                   \implies b.k = \keyX) 
\end{align*}
When checked by the verifier (line 22), $C'$ is still considered
valid, meaning that the decision encoded by $k_1$ is irrelevant to the
property; thus, $k_1$ can be safely
removed.

However, removing $k_2 = (b.m = \secret \implies b.k = \keyX)$ results
in a violation of the property. Thus, $k_2$ is kept as part of the final
maximal constraint expression.
\ifdefined\arxiv \end{exampleregion} \fi



\input{dataflow}
\section{Implementation and Case Study}
\label{sec-case-study}

This section describes an implementation of our synthesis
technique, and a case study on security
protocols. The goal of the study is to answer (1)
whether our technique can be used to synthesize valid implementation
mappings for realistic systems, and (2) how effective our
generalization-based algorithm is over the naive approach.



\ifdefined\arxiv
\subsection{Implementation}
\label{sec-implementation}
\else
\para{Implementation}
\fi

\ifdefined\tacas
We have built a prototype implementation of the synthesis algorithm
from Section~\ref{sec-synthesis-technique}. In particular, our tool uses the
Alloy Analyzer~\cite{alloy} as the underlying modeling and
verification engine. We chose Alloy because (1) its flexible,
declarative logic is suitable for encoding the semantics of the
mapping composition as well as specifying mapping constraints, and (2)
its SAT-based engine provides automatic property checking. However,
our approach does not prescribe the use of a particular verification
engine, and other tools may well be suitable. Our prototype is
available at \url{https://github.com/eskang/MappingSynthesisTool}.
\else 
We have built a prototype tool that is capable of performing both the
naive and generalization algorithms described in
Section~\ref{sec-synthesis-technique}. Our tool is currently plugged in with two
different verifiers: Spin~\cite{spin}, an explicit-state model
checker, and the Alloy Analyzer~\cite{alloy}, a modeling and analysis tool
based on a first-order relational logic. Both tools have their
strengths and weaknesses~\cite{zave-comparison-alloy-spin}.  Spin
provides stronger guarantees in that it is capable of fully exploring
the state space, whereas Alloy performs a kind of bounded model
checking (where the maximum length of traces explored is restricted to
some fixed bound). On the other hand, in our experience so far, Alloy
tends to find counterexamples more quickly (if they exist within the
given bound), in part thanks to its constraint solving backend.

We began with an implementation that employed Spin as a
verifier. Through our experiments, we discovered that as the size of
the relaxed constraint $C'$ increased during the iterative
generalization procedure (line 21 in Algorithm~\ref{fig-algorithm}),
the verification task by Spin became more demanding, eventually
becoming a major bottleneck in the synthesis algorithm. We then
employed the Alloy Analyzer as the verifier, and found that it did not
suffer from the same issue. We believe that this is partly due to the
constraint-based nature of Alloy: The relaxation step involves simply
removing a constraint from the Alloy model, and does not adversely
affect the performance of the constraint solver (in some cases, it
leads to improvement).
\fi



\ifdefined\arxiv
\subsection{OAuth Protocols}
\else
\para{OAuth protocols}
\fi

As a major case study, we took on the problem of synthesizing valid
mappings for \textit{OAuth}, a popular family of
protocols used to carry out a process called \textit{third-party
  authorization}~\cite{oauth2}. The purpose of OAuth is to allow an
application (called a \textit{client} in the OAuth terminology) to
access a resource from another application (an \textit{authorization
  server}) without needing the credentials of the resource owner (an
\textit{user}). For example, a gaming application may initiate an
OAuth process to obtain a list of friends from a particular user's
Facebook account, provided that the user has authorized Facebook to
release this resource to the client.


In particular, we chose to study two versions of OAuth---OAuth 1.0 and
2.0. Although OAuth 2.0 is intended to be a replacement for OAuth 1.0,
there has been much contention within the developer community about
whether it actually improves over its predecessor in terms of
security. \ifdefined\arxiv For this reason, certain major websites (such as Twitter and
Flickr) still rely on OAuth 1.0, while others have adopted 2.0. In
fact, the original creator of OAuth himself has recommended 1.0 as the
more secure version~\cite{oauth-versions}:
\begin{quote}
  ...OAuth 2.0 at the hand of a developer with deep
  understanding of web security will likely result is a secure
  implementation. However, at the hands of most developers...2.0 is likely to
  produce insecure implementations.
\end{quote}\fi
Since both protocols are designed to provide the same security
guarantees (i.e., both share common properties), our goal was
to apply our synthesis approach to systematically compare what
developers would be required to do in order to construct secure
web-based implementations of the two. \ifdefined\arxivWe began by constructing and
verifying abstract protocol models against $\phi$ to confirm that they
indeed provide the necessary security guarantees at the abstract
level. We then applied our technique to synthesize valid mappings from
the two protocols to a model of the HTTP
platform. The rest of this
section describes our experimental procedure and results.\fi

\begin{figure}[!t]
\centering
\includegraphics[width=0.93\textwidth]{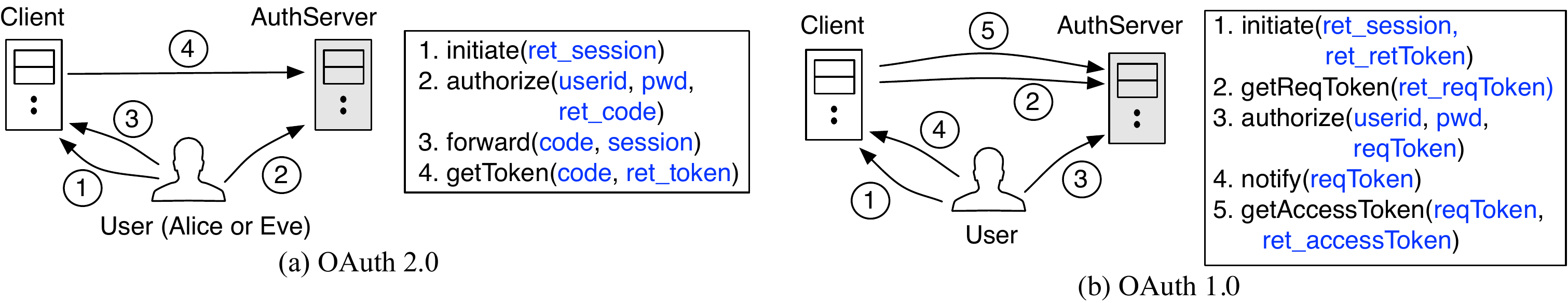}
\vspace{-2.5mm}
\caption{\small{A high-level overview of the two OAuth protocols, with a
  sequence of event labels that describe protocol steps in the typical
  order that they occur. Each arrowed edge indicates the direction of
  the communication. Variables inside labels with the prefix
  \sf{ret\_} represent return parameters. For example, in Step 2 of
  OAuth 2.0, \sf{User} passes her user ID and password as arguments
  to \sf{AuthServer}, which returns \sf{ret}\_\sf{code} back to
  \sf{User} in response.}}
\label{fig-oauth}
\end{figure}

\subsection{Experimental Setup}

\para{OAuth models} We constructed Alloy models of OAuth 1.0 and 2.0
based on the official protocol
specifications~\cite{oauth1,oauth2}. Due to limited space, we give
only a brief overview of the models. Each model consists of four
processes: \Client, \AuthServer, and two users, \Alice and \Eve
(latter with a malicious intent to access Alice's resources).

A typical 2.0 workflow, shown in Figure~\ref{fig-oauth}(a),
begins with a user (\Alice or \Eve) initiating a new protocol session
with \Client (\initiate). The user is then asked to prove her own
identity to \AuthServer (by providing a user ID and a password) and
officially authorize the client to access her resources
(\authorize). Given the user's authorization, the server then
allocates a unique code for the user, and then redirects her back to
the client. The user forwards the code to the client (\forward), which
then can exchange the code for an access token to her resources
(\getToken).

Like in OAuth 2.0, a typical workflow in 1.0 (depicted in
Figure~\ref{fig-oauth}(b)) begins with a user initiating a new session
with \Client (\initiate). Instead of immediately directing the user to
\AuthServer, however, \Client first obtains a \textit{request token}
from \AuthServer and associates it with the current session
(\getReqToken). The user is then asked to present the same request
token to \AuthServer and authorize \Client to access her resources
(\authorize). Once notified by the user that the authorization step
has taken place (\notify), \Client exchanges the request token for an
access token that can be used subsequently to access her resources
(\getAccessToken).

\ifdefined\tacas
There are two desirable properties of OAuth: (1) \textbf{Progress}:
Each protocol session can be completed with the client obtaining an
access token for a user, and (2) \textbf{Integrity}: When the client
receives an access token, it must correspond to the user who initiated
the current protocol session. The input formula to our algorithm,
$\phi$, is constructed as a conjunction of these two properties (due
to limited space, more details can be found in the extended
version~\cite{mapping-arxiv}).
\fi

\ifdefined\arxiv
We specified two desirable properties of OAuth: 
\begin{itemize}
\item \textbf{Progress}:
Each protocol session can be completed with the client obtaining an
access token for a user:
\begin{align*}
  \forall t \in T,&\ r \in \sf{Session}, u \in \sf{User} :
  t_0 = \initiate(r) \land t_0 \in out(u) \implies \\
&\exists t' \in T : t \leq t' \land \sf{token}_u \in knows(\Client, t')
\end{align*}
where $t_0$ refers to the last event in trace $t$, and $t \leq t'$
means $t$ is a prefix of $t'$. In other words, if some user $u$
initiates an OAuth session, \sf{Client} must eventually be able to
obtain the access token for $u$.
\item \textbf{Integrity}: When the client
receives an access token, it must correspond to a user who initiated
the protocol session in the past:
\begin{align*}
  \forall t \in T,&\ u \in \sf{User} : 
    \sf{token}_u \in knows(\Client, t) \implies \\
&\exists t' \in T, r \in \sf{Session} : t' \leq t \land l_0
  = \initiate(r) \land t_0 \in out(u) 
\end{align*}
In other words, if \Client is able to obtain a token for some user
$u$, then that same user must have initiated a session in the past.
\end{itemize}
To ensure the validity of the OAuth
models before the synthesis step, we performed verification of the
properties using Alloy; both properties were
verified in several seconds.
\fi

\para{HTTP platform model} In this case study, our goal was to explore
and synthesize \textit{web-based} implementations of OAuth. For this
purpose, we constructed a model depicting interaction between a
generic HTTP server and web browser. To ensure the fidelity of our
model, we studied, as references, similar efforts by other
researchers in building reusable models of the web for
security analysis \cite{stanford-model,bansal-webspi,trier-web-model}
(none of these models, however, has been used for synthesis).

The model contains two types of processes, \Server and \Browser (which
may be instantiated into multiple processes representing different
servers and browsers). They interact with each other over HTTP
requests, which share the following signature:
\begin{align*}
\req(method: \Method, url: \URLL, headers: \List[\Header], body: \Body, ret\_resp: \Response) 
\end{align*}
The parameters of an HTTP request have their own internal
structures, each consisting of its own parameters as follows:
\begin{align*}
&\sf{url}(host: \Host, path : \Path, queries: \List[\Query]) \quad \header(name: \Name, val: \Value) \\
&\resp(status: \Status, headers: \List[\Header], body: \Body) 
\end{align*}
Our model describes \textit{generic}, \textit{application-independent}
HTTP interactions. In particular, each \Browser process is a machine
that constructs, at each communication step with \Server, an arbitrary
HTTP request by non-deterministically selecting a value for each
parameter of the request. The processes, however, follow a
\textit{platform-specific} logic; for instance, when given a response
from \Server that instructs a browser cookie to be stored at a
particular URL, \Browser will include this cookie along with every
subsequent request directed at that URL. In addition, the model
includes a process that depicts the behavior of a web attacker, who
may operate her own malicious server and exploit weaknesses in a
browser to manipulate the user into sending certain HTTP requests.



\begin{figure}[!t]
\centering
\includegraphics[width=0.80\textwidth]{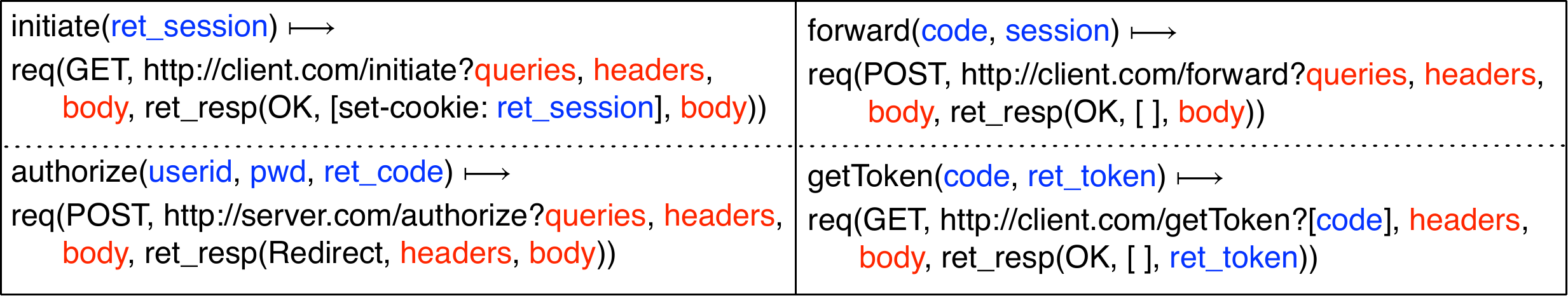}
\vspace{-2.5mm}
\caption{\small{Partial mapping specification from OAuth 2.0 to HTTP. Terms highlighted in blue and red are variables that
    represent the parameters inside OAuth and HTTP labels,
    respectively. For example, in \sf{forward}, the abstract
    parameters \sf{code} and \sf{session} may be transmitted as part
    of an URL query, a header, or the request body, although its URL
    is fixed to \sf{http://client.com/forward}.}}
\label{fig-oauth-mapping}
\end{figure}

\para{Mapping from OAuth to HTTP} Building a web-based implementation
of OAuth involves decisions about how abstract protocol operations are
to be realized in terms of HTTP requests. As an input to the
synthesizer, we specified an initial set of constraints that describe
partial implementation decisions; the ones for OAuth 2.0 are shown in
Figure~\ref{fig-oauth-mapping}. These decisions include a designation
of fixed host and path names inside URLs for various OAuth operations
(e.g., \sf{http:/client.com/initiate} for the OAuth \sf{initiate}
event), and how certain parameters are transmitted as part of an HTTP
request (\sf{ret}\_\sf{session} as a return cookie in
\sf{initiate}). It is reasonable to treat these
constraints as given, since they describe decisions that are common
across typical web-based OAuth implementations.

\ifdefined\arxiv
\para{Associativity}

Typically, in our methodology, the designer would start by composing
the processes in an abstract protocol model (i.e., OAuth) to analyze
the desired properties of the protocol before mapping it into the
underlying platform; then, the process that represents the overall
system, $Sys_1$, would be constructed as follows:
\begin{align*}
& Sys_1 = (\mparw{\Alice}{\Browser}{m_1}) \parallel
  (\mparw{\AuthServer}{\Server_{AS}}{m_2}) \parallel
  (\mparw{\Client}{\Server_{C}}{m_3}) \\
& m_1 = \{ (\sf{initiate}, \sf{req}_i), (\sf{authorize}, \sf{req}_a),
  (\sf{forward}, \sf{req}_f)\} \\
& m_2 = \{ (\sf{authorize}, \sf{req}_a), (\sf{getToken}, \sf{req}_g) \} \\
& m_3 = \{ (\sf{initiate}, \sf{req}_i), (\sf{forward}, \sf{req}_f),
  (\sf{getToken}, \sf{req}_g) \}
\end{align*}
Alternatively, the designer may first decide how each abstract process
is implemented on the platform, and then compose the resulting
concrete processes to construct the overall system, $Sys_2$, as follows:
\begin{align*}
& Sys_2 = \mparw{(\Alice \parallel \AuthServer \parallel
  \Client)}{(\Server_{AS} \parallel \Server_{C} \parallel
  \Browser)}{m} \\
& m = m_1 \cup m_2 \cup m_3 = \{ (\sf{initiate}, \sf{req}_i), (\sf{authorize}, \sf{req}_a), (\sf{forward}, \sf{req}_f), (\sf{getToken}, \sf{req}_g)\}
\end{align*}
The former approach is useful in that it enables \emph{incremental
analysis} (i.e., reason about the properties of an abstract design before
verifying the more detailed concrete implementation), while the latter
facilitates \emph{incremental implementation} (e.g.,\ implement \Client
first before implementing \AuthServer). In this case
study, note that the processes and mappings satisfy the condition for
the associativity theorem (Theorem~\ref{theorem-mapping-associativity}):
\begin{align*}
events(Sys_1) = events(Sys_2) = \{ & \e{\sf{initiate}, \sf{req}_i},
  \e{\sf{authorize}, \sf{req}_a}, \e{\sf{forward}, \sf{req}_f}, \\
  & \e{\sf{getToken}, \sf{req}_g}, \e{ \sf{req}_i},  \e{ \sf{req}_a}, \e{ \sf{req}_g}\}
\end{align*}
Thus, both approaches to mapping yield an equivalent OAuth
implementation model.
\fi

\subsection{Results} 

Our synthesis tool was able to generate valid mapping
constraints for OAuth 2.0 and 1.0. In particular, the constraints
describe mitigations against attacks that exploit an interaction
between the OAuth logic and a browser vulnerability, including
\textit{session swapping}~\cite{oauth-study-sun}, \textit{covert
  redirect}~\cite{oauth-advisory2} (both for OAuth 2.0), and
\textit{session fixation}~\cite{oauth-advisory1} (for OAuth 1.0). We
describe an example of how certain platform decisions may result in an
attack, and how our synthesized mappings mitigate against this attack.

\para{Insecure mapping} Consider OAuth 2.0 from Figure~\ref{fig-oauth}(a).  In order to
implement the \sf{forward} operation, for instance, the developer must
determine how the parameters $code$ and
$session$ of the abstract event label 
are encoded using their concrete counterparts in an HTTP
request. A number of choices is available. In one possible
implementation, the authorization code may be transmitted as a query
parameter inside the URL, and the session as a browser cookie, as
described by the following constraint expression, $\mathcal{X}_1$:
\begin{align*}
  \mathcal{X}_1(a,& b) \equiv 
(b.method = \sf{POST}) \land (b.url.host = \sf{client.com}) \land \\
&  (b.url.path = \sf{forward}) \land (b.url.queries[0]  = a.code) \land \\
&  b.headers[0].name = \sf{cookie} \land  b.headers[0].value = a.session
 \end{align*}
where \sf{POST}, \sf{client.com}, \sf{forward}, and \sf{cookie} 
are predefined constants; and $l[i]$ refers to $i$-th element
of list $l$.

This constraint, however, allows a vulnerable implementation 
where
malicious user \Eve performs the first two steps of the workflow in
Figure~\ref{fig-oauth}(a) using her own credentials, and obtains a
unique code ($\sf{code}_{\mathsf{Eve}}$) from the authorization
server. Instead of forwarding this to \sf{Client} (as she is expected
to), Eve keeps the code herself, and crafts her own web page that
triggers the visiting browser to send the following HTTP request:
\begin{align*}
\req(\sf{POST},
  \sf{http://client.com/forward?code}_{\mathsf{Eve}}, ...) 
\end{align*}
Suppose that Alice is a naive browser user who may occasionally be
enticed or tricked into visiting malicious web sites. When Alice
visits the page set up by Eve, Alice's browser automatically generates
the above HTTP request, which, given the decisions in
$\mathcal{X}_1$, corresponds to a valid \sf{forward} event:
\begin{align*}
& \forward(\sf{code}_{\mathsf{Eve}}, \sf{session}_{\mathsf{Alice}}) \mapsto \\
& \req(\sf{POST}, \sf{http://client.com/forward?code}_{\mathsf{Eve}},
[(\sf{cookie}, \sf{session}_{\mathsf{Alice}})], ...) 
\end{align*}
Due to the standard browser logic, the cookie corresponding to
$\sf{session}_{\mathsf{Alice}}$ is included in every request to
\sf{client.com}. As a result, \Client mistakenly accepts
$\sf{code}_\mathsf{Eve}$ as the one for Alice, even though it 
belongs to Eve, violating the integrity property of OAuth. 

\para{Synthesized, secure mapping} 
Mapping expression $\mathcal{X}_1$ is incorrect as it allows insecure
mappings. Our tool is able to automatically synthesize a secure
mapping expression $\mathcal{X}_2$, described below. $\mathcal{X}_2$
fixes the major problem of $\mathcal{X}_1$, namely, that in a
browser-based implementation, the client cannot trust an authorization
code as having originated from a particular user (e.g.,\ \sf{Alice}),
since the code may be intercepted or interjected by an attacker
(\sf{Eve}) while in transit through a browser. A possible solution is
to explicitly identify the origin of the code by requiring an
additional piece of tracking information to be provided in each
\sf{forward} request. The mapping expression $\mathcal{X}_2$
synthesized by our tool, given the input partial constraint in
Figure~\ref{fig-oauth-mapping}, encodes one form of this solution:
\begin{align*}
  \mathcal{X}_2&(a, b) \equiv \mathcal{X}_1(a, b)\land 
 (a.session = \sf{session}_{\mathsf{Alice}} \implies b.url.queries[1] = \sf{nonce}_0) \land\\
&  (a.session = \sf{session}_{\mathsf{Eve}} \implies b.url.queries[1] = \sf{nonce}_1)  
 \end{align*}
 where $\sf{nonce}_o$, $\sf{nonce}_1 \in \sf{Nonce}$ are constants
 defined in the HTTP model\footnote{A nonce is a unique piece of
   string intended to be used once in communication.}. In particular,
 $\mathcal{X}_2$ stipulates that every \sf{forward} request must include an
 additional value (\sf{nonce}) as an argument besides the code and 
 the session, and that this nonce be unique for each session value. 
 $\mathcal{X}_2$ ensures that the resulting implementation
 satisfies the desired properties of OAuth 2.


 \ifdefined\arxiv
\para{OAuth 1.0 vs 2.0} Based on the comparison of the two protocol
workflows in Figure~\ref{fig-oauth}, OAuth 2.0 appears simpler than
its predecessor, which requires the client to perform an extra step to
obtain a request token from the authorization server (Step 2 in
Figure~\ref{fig-oauth}(b)).

Simplicity, however, sometimes can result in a loss of security. In
OAuth 1.0, the client knows exactly the request token that it expects
to receive in Step 4 (the same one as in Step 2). Thus, it needs not
trust that the user will always deliver a correct request token in
Step 4, and does not suffer from the above attack.  OAuth 2.0, on the
other hand, relies on the user to deliver a correct authorization code
(Step 3 in Figure~\ref{fig-oauth}(a))---an assumption which does not
necessarily hold in a browser-based implementation, as described
above. In effect, the synthesized mapping $\mathcal{X}_2$ captures a way to
harden OAuth 2.0 implementations against security risks of relying on
this assumption.
\fi


\para{Performance} Figure~\ref{fig-results} shows statistics from our
experiments synthesizing valid mapping constraints
for the two OAuth protocols\footnote{The experiments were performed on
  a Mac OS X 2.7 GHz machine with 8G RAM and MiniSat~\cite{minisat} as
  the underlying SAT solver employed by the Alloy Analyzer.}. Overall,
the synthesizer took approximately 17.6 and 24.0 minutes to synthesize
the constraints for 1.0 and 2.0, respectively. This
level of performance is acceptable for \textit{generic} protocols such as
OAuth: Once synthesized, the implementation guidelines captured by the
constraints can be consulted by \textit{multiple} developers to build
their own OAuth implementations, amortizing the cost of the
synthesis effort over time.

\begin{figure}[!t]
\centering
  \includegraphics[width=0.90\textwidth]{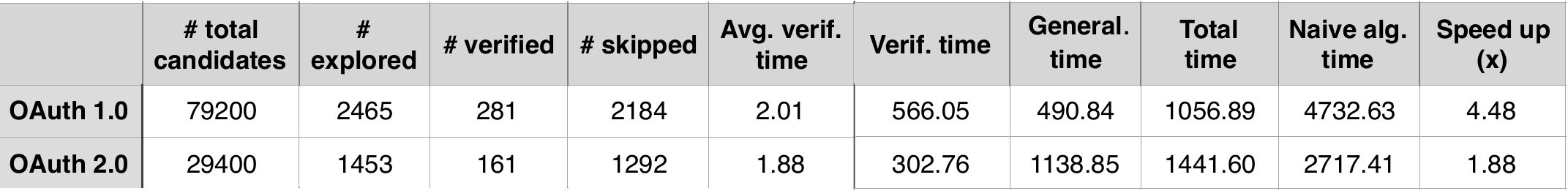}
\vspace{-2.5mm}
\caption{\small{Results from the experiments (all times in
    seconds). ``\# total candidates'' is the total number of possible
    mappings; ``\# explored'' is the number of
    iterations taken by the main synthesis loop (lines 3-15,
    Figure~\ref{fig-algorithm}) before a solution was found. ``Verif.\
    time'' and ``General.\ time'' are the total amounts of time spent
    on verification and generalization, respectively. ``Total time''
    refers to the time spent by the generalization-base algorithm to
    synthesize a maximal constraint.}}
\label{fig-results}
\end{figure}

In both cases, the tool spent a considerable amount of time on the
generalization step to learn the invalid regions of the search
space. It can also be seen that the generalization is effective at
identifying and discarding a large number of invalid candidates, and
achieves a significant amount of speed up over the naive algorithm
(4.48 and 1.88 times for OAuth 1.0 and 2.0, respectively). 


\ifdefined\arxiv
Since OAuth 1.0 is a more complex protocol than 2.0, we expected the
synthesis procedure to take more time on the former; indeed, the
performance of the naive algorithm was consistent with this
expectation (4732.63 vs 2717.41 seconds). However, we were surprised
to find that the generalization-based algorithm was able to find a
solution for OAuth 1.0 faster than it did for 2.0. In particular, the
synthesizer spent a significantly larger proportion of its time on the
generalization step ($1138.85/1441.60 \approx 79\%$); one possible
explanation is that the search space for OAuth 2.0 is more densely
populated with \textit{incomparable} invalid constraints than 1.0.
\fi

As an additional experiment, we ran the generalization-based algorithm
to explore the search space exhaustively without terminating when a
solution is found. The algorithm was able to skip 75375 out of
79200 candidates for OAuth 1.0 (roughly $95\%$), and 28140 out of
29400 candidates for OAuth 2.0 ($\approx 96\%$). \ifdefined\arxiv
This means that only 3825 and 1260 calls to the verifier were needed
to exhaust the search space (for 1.0 and 2.0,
respectively)---relatively small compared to the total number of
candidates.\fi\ Our generalization technique was particularly effective
for the OAuth protocols, since a significant percentage of the
candidate constraints would result in an implementation that violates
the progress property (i.e., it prevents \Alice or \Eve from
completing a protocol session in an expected order). Often, the
decisions contributing to this violation could be localized to a small
subset of entries in a mapping (for example, attempting to send a
cookie to a mismatched URL, which is inconsistent with the behavior of
the browser process). By identifying this subset, our algorithm was
able to discover and eliminate a large number of invalid mappings.


\ifdefined\techreport
\para{Lessons learned} 

The input mapping constraint (depicted in
Figure~\ref{fig-oauth-mapping}), capturing the developer's partial
knowledge or decisions, was crucial for
reducing the size of the search space and keeping the synthesis
procedure tractable. We believe that in practice, our tool will be
more effective as a \textit{completion} tool---allowing the developer
to express her partial knowledge using a constraint, and filling in
gaps that represent her uncertainty---rather than one that generates
an entire set of implementation decisions from scratch.
\fi


\para{Threats to validity} \ifdefined\techreport One potential source of errors in our study
is the scope and accuracy of the models used to describe the OAuth
protocols and HTTP platform. The security guarantees provided by
synthesized mappings are limited to the extent which these models are
accurately able to capture potential attacks on real OAuth
implementations. For example, our synthesized mappings do not protect
an implementation against attacks that exploit a vulnerability called
\textit{cross-site scripting}: Modeling this vulnerability would involve
details about how individual HTML pages are sanitized, which, in turn,
depend on the underlying web development framework or libraries
used. However, we believe that this is not an inherent flaw in our
approach, but rather a risk that may arise in any
model-based approach to verification and synthesis.\fi

One potential source of errors stems from the finitization of the
system models. To finitize the set of labels $L$ and ensure that the
synthesis procedure terminates, we bounded the size of datatype
domains to 4 (i.e., 4 access tokens, cookies, etc.). While we believe
that this bound is sufficient to explore possible interactions between
protocol participants during an OAuth session, it is possible that we
might have missed a potential security violation involving a larger
number of data elements. To mitigate this risk, we plan to explore
infinite-state verification methods (e.g., using SMT solvers~\cite{z3,yices}) as part of
future work.




\section{Related Work}
\label{sec-related-work}
\ifdefined\tacas\vspace{-2.5mm}\fi

A large body of literature exists on \textit{refinement-based} methods
to system construction~\cite{hoare-refinement,back-refinement}. These
approaches involve building an implementation $Q$ that is a behavioral
refinement of $P$; such $Q$, by construction, would satisfy the
properties of $P$. In comparison, we start with an assumption that $Q$
is a \textit{given} platform, and that the developer may not have the
luxury of being able to modify or build $Q$ from scratch. Thus, instead of
behavioral refinement (which may be too challenging to achieve), we aim
to preserve some critical property $\phi$ when $P$ is implemented
using $Q$.


The task of synthesizing a valid mapping can be seen as a type of the
\textit{model merging} problem~\cite{model-merging}. This problem has
been studied in various contexts, including architectural
views~\cite{maoz-views}, behavioral
models~\cite{merging-statecharts,merging-different-vocabularies,merging-partial-behavioral-models},
and database schemas~\cite{merging-database-schemas}. Among these, our
work is most closely related to merging of partial behavioral models
\cite{merging-different-vocabularies,merging-partial-behavioral-models}. In
these works, given a pair of models $M_1$ and $M_2$, the goal is to
construct $M'$ that is a behavioral refinement of both $M_1$ and
$M_2$. The approach proposed in this paper differs in that (1) the
mapping composition involves merging a pair of events with distinct
alphabet labels into a single event that retains all of those labels,
and (2) the composed process $(P \compo Q)$ needs not be a behavioral
refinement of $P$ or $Q$, as long as it satisfies property $\phi$.


Bhargavan and his colleagues presents a compiler that takes a
high-level program written using \textit{session
  types}~\cite{honda-session-types} and automatically generates a
low-level implementation~\cite{bhargavan-protocol-synthesis}. This
technique is closer to compilation than to synthesis in that it uses a
fixed translation scheme from high-level to low-level operations in a
specific language environment (.NET), without searching a space of
possible translations.




Our approach is similar to a number of other synthesis
frameworks~\cite{oracle-guided-synthesis,rishabh-storyboard,armando-sketch,ScenariosHVC2014,CompletionCAV2015,AlurTripakisSIGACT17}
in allowing the user to provide a partial specification of the
artifact to be synthesized (in form of constraints or examples),
having the underlying engine \textit{complete} the remaining
parts. This strategy is crucial for pruning the space of candidate
solutions and keeping the synthesis problem tractable. Synthesizing a
low-level implementation from a high-level specification has also been
studied in the context of data structures~\cite{fiat-synthesis,hawkins-synthesis2,hawkins-synthesis1}.




\vspace{-10pt}

\section{Conclusions}
\label{sec-conclusions}
\ifdefined\tacas\vspace{-2.5mm}\fi

In this paper, we have proposed a novel system design methodology
centered around the notion of \textit{mappings}. We have presented
novel \emph{mapping synthesis problems} and an algorithm for
efficiently synthesizing valid mappings. In addition, we have validated our
approach on realistic case studies involving the OAuth
protocols. 

\ifdefined\arxiv
As with many synthesis problems, scalability remains a challenge. One
promising direction is to exploit the fact that our
generalization-based algorithm (from Section~\ref{sec-synthesis-technique}) is
easily parallelizable. We are currently devising an algorithm
where multiple machines are used to explore different regions of the
search space in parallel, only communicating to update
and check the invalid constraint set $X$.

Our synthesis tool can be used in an interactive manner. If it fails
to find a valid mapping constraint due to a given constraint $C$ that
is too restrictive, the developer may relax $C$ and re-run the
synthesis procedure. However, the tool currently does not provide an
explanation for \textit{why} it fails to synthesize a mapping; such an
explanation could point to parts of $C$ that must be relaxed, or
certain behavior of $Q$ that entirely precludes valid mappings. We are
developing a root cause analysis technique that can be used to provide
such explanations.

Another major next step is to bridge the gap between 
synthesized mappings (which describe implementation decisions at a
modeling level) and code. For instance, a mapping may be used to
directly generate a working implementation that preserves a desired
property (e.g.,\ a secure, reference OAuth implementation), or used as
a code-level specification to check that a program adheres to the
decisions described in the mapping. We are also exploring potential
applications of our synthesis approach to other domains
where a similar type of mapping arises, such as cyber-physical and
embedded systems~\cite{tripakisLCTES03,SangiovanniEEtimes02,TripakisTOC08}.
\else
Future directions include performance improvements (e.g.,
exploiting the fact that our generalization-based algorithm is easily
parallelizable) and application of our synthesis approach to other
domains beside security (e.g.,\ platform-based design and embedded
systems~\cite{tripakisLCTES03,sangiovanni-platform-based-design,TripakisTOC08}).
\fi
\else

\fi

\ifdefined\arxiv
\section*{Acknowledgements}

This work was partially supported by the NSF SaTC award ``Bridging the
Gap between Protocol Design and Implementation through Automated
Mapping'' (CNS-1801546),
the NSF Breakthrough award ``Compositional System Modeling''
(CNS-132975), the NSF Expeditions in Computing project ``ExCAPE:
Expeditions in Computer Augmented Program Engineering'' (CCF-1138860
and 1139138), and by the Academy of Finland.

\appendix

\section{Proofs}

\subsection{Associativity of Mapping Composition}
\label{appen-thm-associativity}

Let us first prove several lemmas that will enable us to prove a
general theorem about associativity of mapping composition.
\begin{lemma}
\label{lem-project} 
Given trace $t \in T(L)$ and sets of labels $X, Y \subseteq L$ such
that $X \subseteq Y$, $(t \project Y) \project X = t \project X$.
\end{lemma}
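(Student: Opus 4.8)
The plan is to prove the identity by structural induction on the trace $t$, unfolding the recursive definition of the projection operator given above. The base case $t = \tuple{}$ is immediate: both $(\tuple{} \project Y) \project X$ and $\tuple{} \project X$ equal $\tuple{}$ by the first defining equation of $\project$.

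For the inductive step I would write $t = \tuple{e} \cdot t'$ and assume the claim for $t'$. Two elementary observations drive the argument, both consequences of the hypothesis $X \subseteq Y$: first, $e \cap X \subseteq e \cap Y$, so if $e$ meets $X$ it also meets $Y$; and second, $(e \cap Y) \cap X = e \cap (Y \cap X) = e \cap X$. I would then distinguish three cases. (a) If $e \cap Y = \emptyset$, then also $e \cap X = \emptyset$, so both sides simply discard $e$ and reduce to $(t' \project Y)\project X$ and $t' \project X$ respectively, which agree by the induction hypothesis. (b) If $e \cap Y \neq \emptyset$ but $e \cap X = \emptyset$, the outer projection on the left keeps $e \cap Y$, but the inner projection discards it because $(e\cap Y)\cap X = e \cap X = \emptyset$; the right-hand side discards $e$ at once; so both sides again reduce to the induction hypothesis. (c) If $e \cap X \neq \emptyset$, then $e \cap Y \neq \emptyset$ as well, the left-hand side becomes $\tuple{(e \cap Y) \cap X} \cdot ((t' \project Y) \project X)$, and using the second observation together with the induction hypothesis this equals $\tuple{e \cap X} \cdot (t' \project X)$, which is exactly $(\tuple{e}\cdot t') \project X$.

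I do not anticipate a real obstacle here: the proof is pure bookkeeping over the definition of $\project$. The one place to stay careful is case (b), where an event survives the outer projection only to be deleted by the inner one, and case (c), where one must invoke $(e\cap Y)\cap X = e \cap X$ to be sure the retained label set is exactly $e \cap X$ and not merely some subset of it. This lemma will presumably be used downstream to justify rewriting nested projections in the proof of the associativity theorem (Theorem~\ref{theorem-mapping-associativity}).
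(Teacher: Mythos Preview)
Your proof is correct and rests on the same key identity the paper uses, namely $(e \cap Y) \cap X = e \cap X$ when $X \subseteq Y$. The paper's own argument is a one-line sketch that invokes this identity without spelling out the induction; your structural induction with the three cases is simply the fully unfolded version of that sketch, so the approaches coincide.
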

\begin{proof}
  Consider event $e \in events(t)$. Since $X \subseteq Y$,
  $(e \cap Y) \cap X = e \cap X$. Thus, by the definition of the
  projection operator, every event in $(t \project Y) \project X$ is
  also an event in $t \project X$.
\end{proof}

\begin{lemma}
\label{lemma-mapping1}
Given processes $P$, $Q$, and $R$, let $X = \mparw{(\mparw{P}{Q}{m_1})}{R}{m_2}$.
Then, for any trace $t \in T(L)$, $t\in\traces(X)$ if and only if all the following conditions hold:
\begin{enumerate}
\item $t\in (events(X))^*$ 
\item $(t \project \alpha(P)) \in \traces(P)$
\item $(t \project \alpha(Q)) \in \traces(Q)$
\item $(t \project \alpha(R)) \in \traces(R)$
\end{enumerate}
\end{lemma}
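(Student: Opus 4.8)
The plan is to prove the equivalence by unfolding the definition of mapping composition (\bf{Def.\ 2}) twice and then collapsing the resulting nested projections using Lemma~\ref{lem-project}. Write $PQ := \mparw{P}{Q}{m_1}$, so that $X = \mparw{PQ}{R}{m_2}$ and $\alpha(PQ) = \alpha(P)\cup\alpha(Q)$. Applying \bf{Def.\ 2} to $X$, the membership $t\in\traces(X)$ is equivalent to the conjunction of (a) $t\in(events(X))^*$, (b) $(t\project\alpha(PQ))\in\traces(PQ)$, and (c) $(t\project\alpha(R))\in\traces(R)$. Then I apply \bf{Def.\ 2} a second time, now to $PQ$ itself: clause (b) holds iff $(t\project\alpha(PQ))\in(events(PQ))^*$ and $((t\project\alpha(PQ))\project\alpha(P))\in\traces(P)$ and $((t\project\alpha(PQ))\project\alpha(Q))\in\traces(Q)$. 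Since $\alpha(P)\subseteq\alpha(PQ)$ and $\alpha(Q)\subseteq\alpha(PQ)$, Lemma~\ref{lem-project} rewrites the two nested projections as $t\project\alpha(P)$ and $t\project\alpha(Q)$, which are exactly conditions 2 and 3 of the statement; clause (c) is condition 4, and clause (a) is condition 1.

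The one piece that does not fall out of this rewriting is the auxiliary requirement $(t\project\alpha(PQ))\in(events(PQ))^*$ that surfaces when unfolding clause (b). In the forward direction it is immediate from $t\in\traces(X)$, and in the backward direction I would derive it from condition 1 alone, arguing event by event: for each event $e$ occurring in $t$, the fact that $e\in events(X)$ forces, via the conjuncts of \bf{Def.\ 2} applied to $X$, that $e\cap\alpha(PQ)\in events(PQ)$; and by the recursive definition of the projection operator the events occurring in $t\project\alpha(PQ)$ are precisely the sets $e\cap\alpha(PQ)$ that are nonempty. Hence every event of $t\project\alpha(PQ)$ lies in $events(PQ)$, i.e.\ $(t\project\alpha(PQ))\in(events(PQ))^*$. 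Combining the two directions, $t\in\traces(X)$ is equivalent to conditions 1--4 holding simultaneously.

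I expect the main obstacle to be exactly this last bookkeeping step. The event component $\events$ of a process is specified by a per-event set-membership condition, whereas the trace component $\traces$ is specified by a projected-trace condition, so one must check carefully that the per-event condition ``$e\cap\alpha(PQ)\in events(PQ)$'' genuinely propagates through the $\project\alpha(PQ)$ operator --- in particular that the events of the projection are exactly the nonempty intersections, and that the degenerate empty-intersection events are harmlessly dropped by projection rather than being required to belong to $events(PQ)$. Everything else is a mechanical self-composition of \bf{Def.\ 2} together with one application of Lemma~\ref{lem-project}; no separate induction on the length of $t$ is needed beyond what is already built into the recursive definition of $\project$.
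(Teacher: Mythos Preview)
Your proposal is correct and follows essentially the same approach as the paper: unfold \bf{Def.\ 2} first for $X$ and then for $PQ$, collapse the nested projections with Lemma~\ref{lem-project}, and in the backward direction recover the auxiliary condition $(t\project\alpha(PQ))\in(events(PQ))^*$ from condition~1 by observing event-by-event that $e\in events(X)$ forces $e\cap\alpha(PQ)\in events(PQ)$. Your remark that empty intersections are harmlessly dropped by the projection operator is exactly the detail the paper leaves implicit.
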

\begin{proof}
($\implies$ direction) Suppose that $t \in \traces(X)$. By the definition of
$traces(\mparw{(\mparw{P}{Q}{m_1})}{R}{m_2})$, it follows that
\begin{align*}
t \in (events(X))^* \land (t \project \alpha(\mparw{P}{Q}{m_1})) \in traces(\mparw{P}{Q}{m_1}) \land (t \project \alpha(R)) \in \traces(R) 
\end{align*}
satisfying Conditions (1) and (4). 

Let $t' = t \project \alpha(\mparw{P}{Q}{m_1}) = t \project \alpha(P
\cup Q)$. By the definition of $traces(\mparw{P}{Q}{m_1})$,
\begin{align*}
& t' \in (events(\mparw{P}{Q}{m_1}))^* \land (t' \project \alpha(P)) \in
  traces(P) \land (t' \project \alpha(Q)) \in \traces(Q) \iff \\
& t' \in (events(\mparw{P}{Q}{m_1}))^* \land ((t \project \alpha(P
\cup Q)) \project \alpha(P)) \in
  traces(P) \land ((t \project \alpha(P
\cup Q)) \project \alpha(Q)) \in \traces(Q)
\end{align*}
Based on Lemma~\ref{lem-project}, we can further conclude the following:
\begin{align*}
t' \in (events(\mparw{P}{Q}{m_1}))^* \land (t \project \alpha(P)) \in
  traces(P) \land (t \project \alpha(Q)) \in \traces(Q) \tag{\bf{A}}
\end{align*}
from which it follows that Conditions (2) and (3) hold.

($\Leftarrow$ direction) Suppose that the above four conditions
hold. Since $t \in (events(X))^*$, the
following statement holds for any $e \in \events(t)$ by definition:
\begin{align}
e \cap \alpha(\mparw{P}{Q}{m_1}) \in events(\mparw{P}{Q}{m_1}) \land e
  \cap \alpha(R) \in events(R) \land cond'(e) \land condmap(e,m) \tag{\bf{B}}
\end{align}
Let $t' = t \project \alpha(\mparw{P}{Q}{m_1})$.  For any event
$e' \in events(t')$, there must exist some event $e \in events(t)$
such that $e' = e \cap \alpha(\mparw{P}{Q}{m_1})$. Thus, by the first
conjunct in \bf{(B)}, $e' \in events(\mparw{P}{Q}{m_1})$. Since $t'$
consists of only events in $events(\mparw{P}{Q}{m_1})$, it follows
that $t' \in (events(\mparw{P}{Q}{m_1}))^*$.  Along with Conditions
(2) and (3), and \bf{(A)}, we can further conclude that
\begin{align*}
t' = t \project \alpha(\mparw{P}{Q}{m_1}) \in traces(\mparw{P}{Q}{m_1})
\end{align*}
Adding Conditions (1) and (4), we derive
\begin{align*}
t\in (events(X))^* \land t \project \alpha(\mparw{P}{Q}{m_1}) \in traces(\mparw{P}{Q}{m_1}) \land (t \project \alpha(R)) \in \traces(R) 
\end{align*}
which implies $t \in traces(X)$, as required.
\end{proof}

\begin{lemma}
\label{lemma-mapping2}
Given processes $P$, $Q$, and $R$, let $Y = \mparw{P}{(\mparw{Q}{R}{m_2})}{m_1}$. 
Then,  for any trace $t \in T(L)$, $t\in\traces(Y)$ if and only if all the following conditions hold:
\begin{enumerate}
\item $t\in (events(Y))^*$ 
\item $(t \project \alpha(P)) \in \traces(P)$
\item $(t \project \alpha(Q)) \in \traces(Q)$
\item $(t \project \alpha(R)) \in \traces(R)$
\end{enumerate}
\end{lemma}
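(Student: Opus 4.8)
The plan is to prove Lemma~\ref{lemma-mapping2} by exploiting the symmetry between $Y$ and the process $X$ of Lemma~\ref{lemma-mapping1}. The quickest route is a reduction via commutativity of $\compo$: since mapping composition is commutative, $\mparw{Q}{R}{m_2} = \mparw{R}{Q}{m_2}$ and hence
\begin{align*}
Y \;=\; \mparw{P}{(\mparw{Q}{R}{m_2})}{m_1} \;=\; \mparw{(\mparw{Q}{R}{m_2})}{P}{m_1}.
\end{align*}
The right-hand side is exactly the pattern $\mparw{(\mparw{\cdot}{\cdot}{\cdot})}{\cdot}{\cdot}$ of Lemma~\ref{lemma-mapping1} under the renaming $P \rightsquigarrow Q$, $Q \rightsquigarrow R$, $R \rightsquigarrow P$, $m_1 \rightsquigarrow m_2$, $m_2 \rightsquigarrow m_1$ (inner pair $(Q,R)$ composed under $m_2$, then $P$ added under $m_1$). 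Applying Lemma~\ref{lemma-mapping1} with this renaming, $t \in \traces(Y)$ iff $t \in (events(Y))^*$, $(t \project \alpha(Q)) \in \traces(Q)$, $(t \project \alpha(R)) \in \traces(R)$, and $(t \project \alpha(P)) \in \traces(P)$, which is precisely the conjunction of conditions (1)--(4), listed in a different order.

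Alternatively, if a self-contained argument is preferred, I would mirror the proof of Lemma~\ref{lemma-mapping1} directly. For the ($\implies$) direction: unfolding $\traces(Y)$ via \textbf{Def.\ 2} gives $t \in (events(Y))^*$, $(t \project \alpha(P)) \in \traces(P)$, and $(t \project \alpha(\mparw{Q}{R}{m_2})) \in \traces(\mparw{Q}{R}{m_2})$, establishing (1) and (2); writing $t' = t \project \alpha(Q \cup R)$ and unfolding \textbf{Def.\ 2} once more on $t'$ yields $(t' \project \alpha(Q)) \in \traces(Q)$ and $(t' \project \alpha(R)) \in \traces(R)$, and since $\alpha(Q), \alpha(R) \subseteq \alpha(Q) \cup \alpha(R)$, Lemma~\ref{lem-project} collapses the iterated projections to $(t \project \alpha(Q)) \in \traces(Q)$ and $(t \project \alpha(R)) \in \traces(R)$, i.e.\ (3) and (4). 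For the ($\Leftarrow$) direction: from (1) every event $e$ of $t$ satisfies the event-membership clauses of \textbf{Def.\ 2} for $\mparw{P}{(\mparw{Q}{R}{m_2})}{m_1}$, so in particular $e \cap \alpha(\mparw{Q}{R}{m_2}) \in events(\mparw{Q}{R}{m_2})$ and thus $t \project \alpha(\mparw{Q}{R}{m_2})$ is a sequence over $events(\mparw{Q}{R}{m_2})$; combining this with (3), (4) and Lemma~\ref{lem-project} gives $(t \project \alpha(\mparw{Q}{R}{m_2})) \in \traces(\mparw{Q}{R}{m_2})$, and then (1), (2) together with a final application of \textbf{Def.\ 2} yield $t \in \traces(Y)$.

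There is no genuinely hard step here; the argument is routine bookkeeping. The two points that require care are: (i) in the reduction, correctly matching which mapping plays the ``inner'' versus ``outer'' role when commuting $\mparw{Q}{R}{m_2}$ past $P$; and (ii) in the self-contained version, invoking Lemma~\ref{lem-project} with the right containments $\alpha(Q), \alpha(R) \subseteq \alpha(Q \cup R)$ so that doubly-projected traces reduce to singly-projected ones. Since commutativity of $\compo$ was already established for arbitrary processes (it follows from the symmetry of \textbf{Def.\ 2}, and composite subprocesses are themselves processes), the reduction to Lemma~\ref{lemma-mapping1} is legitimate, and I would present that as the main proof, with the mirror argument available as a fallback.
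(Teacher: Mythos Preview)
Your proposal is correct. The paper's own proof is a single sentence: it declares the argument ``symmetric to that of Lemma~\ref{lemma-mapping1}'' and omits the details, which is precisely your fallback mirror argument. Your primary route via commutativity of $\compo$ is a legitimate and slightly slicker alternative that the paper does not take: by rewriting $Y$ as $(\mparw{Q}{R}{m_2})\compow{m_1} P$ you reduce to an \emph{instance} of Lemma~\ref{lemma-mapping1} rather than re-running its proof. Both approaches are equally valid here; the commutativity reduction saves ink but relies on the reader accepting that Lemma~\ref{lemma-mapping1} was proved for arbitrary processes and mappings (it was), while the mirror argument is self-contained and makes the paper's ``symmetric'' claim explicit.
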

\begin{proof}
The proof is symmetric to that of Lemma~\ref{lemma-mapping1} and thus
is omitted.
\end{proof}

\begin{theorem}
\label{theorem-mapping-associativity}
Given processes $P$, $Q$, and $R$, let $X = \mparw{(\mparw{P}{Q}{m_1})}{R}{m_2}$ and $Y = \mparw{P}{(\mparw{Q}{R}{m_4})}{m_3}$. If $\events(X) = \events(Y)$,
then $X=Y$.
\end{theorem}
\begin{proof}
  Note that
  $\labels(X)= \labels(P) \cup \labels(Q) \cup \labels(R)
  = \labels(Y)$.  Along with the assumption that
  $\events(X)=\events(Y)$, it remains to show that
  $\traces(X)=\traces(Y)$. Consider an arbitrary trace $t \in
  \traces(X)$. By Lemma~\ref{lemma-mapping1}, the following four
  conditions hold:
\begin{enumerate}
\item $t\in (events(X))^*$ 
\item $(t \project \alpha(P)) \in \traces(P)$
\item $(t \project \alpha(Q)) \in \traces(Q)$
\item $(t \project \alpha(R)) \in \traces(R)$
\end{enumerate}
Since $\events(X) = \events(Y)$, we can apply these conditions to
Lemma~\ref{lemma-mapping2} to derive that $t \in (events(Y))^*$. Thus,
$\traces(X) = \traces(Y)$ and consequently, $X = Y$.
\end{proof}

From Theorem~\ref{theorem-mapping-associativity}, we can derive a more
restricted version of the condition under which the composition
operator is associative: 
\begin{corollary}
\label{corollary-mapping-associativity}
Given processes $P$, $Q$, and $R$, let $X = \mparw{(\mparw{P}{Q}{m_1})}{R}{m_2}$ and $Y = \mparw{P}{(\mparw{Q}{R}{m_2})}{m_1}$. If $\events(X) = \events(Y)$,
then $X=Y$.
\end{corollary}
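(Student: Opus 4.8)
The strategy is to reduce each of the two nested compositions to a single ``flat'' description in terms of the three base processes $P$, $Q$, $R$, and then read off equality from the hypothesis $\events(X)=\events(Y)$. First I would note that the alphabet of a mapping composition is just the union of the two component alphabets (\bf{Def.\ 2}), so $\alpha(X)=\alpha(P)\cup\alpha(Q)\cup\alpha(R)=\alpha(Y)$; since a process is determined by its alphabet, its event set, and its trace set, and the event sets are equal by assumption, it suffices to prove $\traces(X)=\traces(Y)$.

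The workhorse is a small projection lemma: for label sets $A\subseteq B$ and any trace $t$, $(t\project B)\project A=t\project A$, which follows from $(e\cap B)\cap A=e\cap A$ per event together with the inductive definition of $\project$. Using it I would establish the key lemma: $t\in\traces(X)$ if and only if $t\in(\events(X))^*$ and the three projections $t\project\alpha(P)$, $t\project\alpha(Q)$, $t\project\alpha(R)$ lie in $\traces_P$, $\traces_Q$, $\traces_R$ respectively; and symmetrically for $Y$ with $\events(Y)$ in place of $\events(X)$.

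For the forward direction I would unfold the definition of $\traces$ in \bf{Def.\ 2} twice: $t\in\traces(X)$ gives $t\project\alpha(\mparw{P}{Q}{m_1})\in\traces(\mparw{P}{Q}{m_1})$ and $t\project\alpha(R)\in\traces_R$, and unfolding the inner composition turns the nested projections $(t\project\alpha(P\cup Q))\project\alpha(P)$ into $t\project\alpha(P)$ by the projection lemma (and likewise for $Q$). The backward direction is the subtle step: from the four conditions one must rebuild membership in the nested composition. The point is that the hypothesis ``every event of $t$ lies in $\events(X)$'' is exactly what forces the intermediate trace $t'=t\project\alpha(P\cup Q)$ to consist only of events of $\mparw{P}{Q}{m_1}$ (since $e\in\events(X)$ implies $e\cap\alpha(\mparw{P}{Q}{m_1})\in\events(\mparw{P}{Q}{m_1})$ by \bf{Def.\ 2}); combined with the $P$- and $Q$-projection conditions and the projection lemma this yields $t'\in\traces(\mparw{P}{Q}{m_1})$, and then the $R$-condition plus $t\in(\events(X))^*$ gives $t\in\traces(X)$.

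Finally, the theorem follows directly: if $t\in\traces(X)$ then $t$ satisfies the four conditions with $\events(X)$; since $\events(X)=\events(Y)$ it satisfies the corresponding conditions for $Y$ — the conditions on the $P$-, $Q$-, and $R$-projections are literally the same, and ``$t\in(\events(X))^*$'' becomes ``$t\in(\events(Y))^*$'' — so $t\in\traces(Y)$, and conversely. Hence $\traces(X)=\traces(Y)$, and with equal alphabets and equal event sets, $X=Y$. I expect the backward direction of the key lemma to be the main obstacle: the projection conditions alone do not force $t\project\alpha(P\cup Q)$ to stay inside $\events(\mparw{P}{Q}{m_1})$, and it is precisely the event-set hypothesis that supplies the missing ingredient — which also explains why the theorem is stated with that hypothesis rather than holding unconditionally.
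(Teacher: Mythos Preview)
Your proposal is correct and follows essentially the same approach as the paper: the projection lemma, the ``flat'' characterization of $\traces(X)$ and $\traces(Y)$ in terms of the three base projections plus membership in $(\events(\cdot))^*$, and the final step exploiting $\events(X)=\events(Y)$ all match the paper's Lemmas~\ref{lem-project}, \ref{lemma-mapping1}, and \ref{lemma-mapping2} and the proof of Theorem~\ref{theorem-mapping-associativity}. The only structural difference is that the paper first states and proves the slightly more general Theorem~\ref{theorem-mapping-associativity} (allowing independent mappings $m_3,m_4$ on the $Y$ side) and then obtains the corollary by specializing $m_3=m_1$, $m_4=m_2$; your argument is that same proof carried out directly for the corollary.
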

\begin{proof}
It follows as a special case of
Theorem~\ref{theorem-mapping-associativity} with $m_3 = m_1$ and $m_4
= m_2$. 
\end{proof}

\fi

\ifdefined\techreport
\bibliographystyle{plain}
\else
\bibliographystyle{plain}
\fi
\bibliography{sigproc} 

\ifdefined\arxiv
\else
\para{Acknowledgements}
This work was partially supported by the NSF SaTC award CNS-1801546.
\fi
\end{document}